\documentclass[runningheads]{llncs}
\usepackage[T1]{fontenc}
\usepackage{graphicx}
\usepackage{amssymb}
\usepackage{amsmath}
\usepackage{xcolor}
\usepackage{psfrag}
\usepackage{subcaption}

\DeclareMathOperator*{\argmin}{arg\,min}

\newtheorem{assumption}{Assumption}

\begin{document}
\title{Synthesis of Limit Cycles and Reference Tracking via Switching Affine Systems}
%
%
\author{Nils Hanke\orcidID{0009-0008-8940-2677} \and
Zonglin Liu\orcidID{0000-0002-0196-9476} \and
Olaf Stursberg\orcidID{0000-0002-9600-457X}}
\authorrunning{N. Hanke et al.}
%
\institute{Control and System Theory, Dept. of Electrical Engineering and Computer Science, University of Kassel, Germany.\\
\email{\{n.hanke,z.liu,stursberg\}@uni-kassel.de}}
\maketitle              
\begin{abstract}
This paper introduces a novel method to approximate limit cycles of nonlinear ODEs by use of switching affine dynamics in order to ease data-based modeling and analysis. Previous approaches to approximating limit cycles by switching systems have been largely confined to simple partitions into two-regions or low-dimensional (often planar) settings. In contrast, this study utilizes more general partitions in higher-dimensional state spaces, augmented by external signals, to develop a synthesis scheme that guarantees a globally stable limit cycle. The synthesis task is formulated and solved based on constrained numerical optimization. Starting from sampled data of the nonlinear dynamics, the method minimizes the error between the data and the limit cycle generated by the switching affine model, while employing stability constraints to ensure global stability.
Based on the obtained model, the paper tackles the problem of reference tracking for switching affine systems with periodic behavior. While the approximation scheme is based on a common Lyapunov function, the reference tracking approach uses multiple Lyapunov functions to achieve less conservative convergence results. The principle and effectiveness of the proposed methods are illustrated through a set of examples.

\keywords{Switching systems \and limit cycles \and switching control \and reference tracking \and Lyapunov functions.}
\end{abstract}
\section{Introduction}
\label{section1}
Periodic behavior represents a foundational phenomenon in diverse fields ranging from biology over engineering to physics, among other disciplines \cite{teplinsky2008limit,mirollo1990synchronization,peterchev2003quantization}. Often periodic behavior is mathematically represented by established nonlinear oscillator models—including those by Kuramoto, Van der Pol, FitzHugh Nagumo, Duffing, and Goodwin \cite{kuramoto2005self,joshi2016synchronization,gaiko2011multiple,Dorfler.2014,kudryashov2021generalized,gonze2021goodwin,atherton1980survey} -- the scope of analytic techniques for these dynamics remains constrained. In particular, characterizing and analyzing limit cycles, especially with respect to uniqueness and stability conditions, is often feasible only in special cases. A key challenge therefore lies in transforming or approximating the underlying oscillatory dynamics by model class which are tractable for rigorous verification of dynamic properties. While existing data-driven techniques, including machine learning and hybrid system identification, are capable of approximating periodic dynamics \cite{xu2019frequency}, they typically lack the structure required for a formal analysis of limit cycle properties. This limitation impedes the systematic investigation of oscillatory phenomena in critical applications. One prominent example is the study of circadian rhythms in biological systems \cite{werckenthin2020neither}, where a thorough understanding of stability, phase shifts, and synchronization is indispensable.

Switching or piecewise-affine systems (PAS) are an effective model class for approximating nonlinear dynamics, facilitating simplified analysis \cite{paoletti2007identification,lauer2011continuous}. This effectiveness stems from two key features: (1) an analytical solution exists within each region of the partitioned state space, and (2) the approximation accuracy can be refined by adjusting the partition and the system parametrization. 
Regarding the approximation of nonlinear systems exhibiting periodic trajectories, \cite{lum1991global,freire1998bifurcation} established conditions for the existence of limit cycles in PAS defined by partitions consisting of two regions in the plane ($\mathbb{R}^2$). Subsequent studies \cite{coll2001degenerate,llibre2008existence} further investigated the uniqueness and stability of such limit cycles. Building on these foundational results, \cite{kai2012limit} synthesized planar PAS with stable polygonal limit cycles, while later work \cite{HS23,HLS24} developed algorithms to generate planar switching affine systems that approximate given limit cycles with guarantees on uniqueness and local stability. A significant limitation of these approaches is their reliance on just two affine dynamics separated by a single line, which restricts the achievable accuracy of approximation. For a comprehensive overview of conditions for the existence of limit cycle in planar piecewise-linear systems, see \cite{freire1998bifurcation} or Chapter 5.1 of \cite{bernardo2008piecewise}. Although the recent contribution \cite{HLS25} enhances approximation quality by employing multiple partitions, its scope remains restricted to planar systems ($\mathbb{R}^2$) and does not provide guarantees for the global stability of the synthesized limit cycle.

The present work extends the work in \cite{icinco25} and the methods referred to before to higher-dimensional spaces, and it proposes a new method to design tracking controllers for periodic PAS. By leveraging the contraction principle from \cite{pavlov2007convergence}, the proposed synthesis method generates piecewise-affine systems with globally stable limit cycles. A core contribution is a novel technique for state-space partitioning and region-specific affine dynamics synthesis, which ensures both global stability and adjustable approximation accuracy. Stabilization for this class of systems is typically achieved through state feedback control combined with Lyapunov stability theory, see \cite{johansson2003piecewise,habets2004control}. In the proposed synthesis approach for approximating periodic behavior with switching affine systems, stability of the limit cycles is guaranteed via a common Lyapunov function valid across all regions. As shown in \cite{johansson1997computation,zhu2019multiple} this requirement presents a known difficulty, as such functions are often hard to identify and tend to yield conservative results. For cases where a common Lyapunov function is unavailable or formal limit-cycle guarantees cannot be established, the design of a tracking controller is proposed. This controller utilizes multiple Lyapunov functions to achieve asymptotic tracking of the target periodic orbit, following principles from prior work \cite{van2008tracking,miljkovic2025} on non-periodic reference tracking.

The remainder of this paper is organized as follows. Section 2 provides definitions of limit cycles, stability, and reference tracking in the context of switching affine systems. In Section 3, first the partitioning of the state space in arbitrary dimensions is addressed. This is followed by the proposal of a synthesis strategy, designed for high approximation accuracy of the data set and formulated as an optimization problem with constraints which guarantee global stability of the resulting limit cycle. The identification procedure is demonstrated through examples including cases in 2 and 3 dimensions. Section 4 is dedicated to the design of a controller for reference tracking of periodic switching affine systems, accompanied by a numerical tracking example. Finally, Section 5 concludes the paper and outlines directions for future work.

\section{Problem Description}
\label{section2}
The aim of the main procedure proposed in this paper is to reconstruct the limit cycles of a broad class of oscillatory systems with the following property: The underlying nonlinear dynamics, defined in $\mathbb{R}^{n_x}$, produces a smooth and stable limit cycle that 1.) is located within an $(n_x-1)$-dimensional manifold, 2.) oscillates around a virtual center point, and 3.) exhibits neither strong twisting nor self-intersections.

Let an ordered set $F := \{\tilde{x}_1, \tilde{x}_2, \dots, \tilde{x}_{n_F}\}$ of state samples $\tilde{x}_i \in \mathbb{R}^{n_x}$ be collected along the limit cycle of the nonlinear dynamics. For simplicity, assume that the sampling time $\Delta t$ is constant along the cycle; however, the method presented later remains valid also for non-uniform sampling intervals.  
The sampling is considered dense in the sense that  $n_F \gg n_x$, or equivalently, the sampling time $\Delta t $ is significantly smaller than the period $T = n_F \cdot \Delta t$ of the limit cycle, and $\Delta t < 1$.

Given the set \(F\), the main objective of this contribution is to introduce a method for constructing a dynamic model approximating the sampled limit cycle while preserving its characteristic properties. To achieve this, the class of switching affine systems is employed: Let $x(t) \in \mathbb{R}^{n_x}$ be the state at time $t \in \mathbb{R}$, and let $u(t) \in \mathbb{R}^{n_u \times 1}$ represent an input signal that is multiplied by a matrix $B \in \mathbb{R}^{n_x \times n_u}$ -- the input will later be used to achieve convergence properties. The state space $\mathbb{R}^{n_x}$ is assumed to be partitioned into a finite number of polytopes $P_i \subseteq \mathbb{R}^{n_x}$ with $i \in\{1, \ldots, n_P\}$. These polytopes are parametrized by $C_i \in \mathbb{R}^{1 \times {n_x}}$, $d_i \in \mathbb{R}$ according to the following relation:
\begin{align}  \label{eq:partition} 
P_i:=\{x\in  \mathbb{R}^{n_x} | C_ix  \ge d_i, C_{i+1}x < d_{i+1}\},~\bigcup_{i =1}^{n_P} P_i =\mathbb{R}^{n_x}.
\end{align}
For pairs $(A_i, b_i)$ with $A_i \in \mathbb{R}^{n_x \times n_x}$ and $b_i \in \mathbb{R}^{n_x \times 1}$, the affine dynamics assigned to each $P_i$ is given by:
\begin{align}  \label{eq:pwadef} 
\dot{x}(t)=A_ix(t) +b_i + B u(t), ~\text{for}~ x(t) \in P_i.
\end{align}
Consider a set of switching times $T_k = \{t_0, t_1, \ldots\}$ with initial time $t_0 = 0$. A trajectory $\bar{x}_{[0,\infty[}$ of system \eqref{eq:pwadef} starting in $x(t_0) = x_0$ represents the state evolution across a sequence of phases $[t_k,t_{k+1}]$ between two consecutive switching times. For each phase with $t \in [t_k,t_{k+1}]$, the pair $(A_i,b_i)$ in \eqref{eq:pwadef} is activated with $i$ satisfying $x(t) \in P_i$. A limit cycle, as a specific trajectory of \eqref{eq:pwadef}, is defined as follow:
\begin{definition}
\label{Definition1}{Limit Cycle}\\
A trajectory $\bar{x}^{*}_{\lbrack0,\infty\lbrack}$ of \eqref{eq:pwadef} is called \emph{limit cycle}, if a finite period $T\in\mathbb{R}_{>0}$ exists
such that for any point $x(t)\in \bar{x}^{*}_{\lbrack0,\infty\lbrack}$, $t\in\mathbb{R}_{\geq 0}$ it applies that: $x(t+T)=x(t)$. \hfill $\Box$  
\end{definition}\vspace{1mm}

\begin{definition}
\label{Definition2}{Stability of a Limit Cycle}\\
A limit cycle $\bar{x}^{*}_{\lbrack0,\infty\lbrack}$ of \eqref{eq:pwadef} is called \emph{globally stable}, if every trajectory converges towards $\bar{x}^{*}_{\lbrack0,\infty\lbrack}$ independent of the initialization $x(0)=x_0\in \mathbb{R}^{2}$.
 \hfill $\Box$
\end{definition}
The identification of a model of the type  \eqref{eq:pwadef} based on the set $F$ requires the synthesis of the following parameters: 1.) the number $n_P$ of elements $P_i$ in the state space partition, 2.) the boundaries $C_i x = d_i$ for each $P_i$, 3.) the matrix pair $(A_i, b_i)$ corresponding to each $P_i$, and 4.) the signal $u(t)$ along with the vector $B$.
To satisfy the properties assumed for the limit cycle of the nonlinear system, the specific synthesis requirements are:
\begin{itemize}
\item The trajectory produced by system \eqref{eq:pwadef} must also constitute a globally stable limit cycle $\bar{x}^{*}_{[0,\infty[}$, as defined in Def. \ref{Definition2}.
\item The period of the limit cycle $\bar{x}^{*}_{[0,\infty[}$ must be $T = n_F \cdot \Delta t$.
\item The limit cycle $\bar{x}^{*}_{[0,\infty[}$ approximates the sample points in the set $F$ as closely as possible.
\end{itemize}
Switching affine systems represent a suitable candidate for such an approximation, as the number of parameters within each $P_i$ is small and the associated dynamics remain relatively simple to analyze, thus allowing to achieve the stability property. In particular, when the considered dynamics exhibits strong nonlinearity along the cycle, approximating it with affine dynamics within restricted regions is well justified, while freedom in selecting the $P_i$ (both in terms of number and geometry) allows for achieving arbitrarily accurate approximations in principle. Nevertheless, most existing work on approximating limit cycles is limited either to two regions in the plane or provides only local stability guarantees. To address these limitations, the following exposition employs the concept of \emph{contractivity} to derive a synthesis procedure that achieves the stated properties of $\bar{x}^{*}_{[0,\infty[}$ via numerical optimization.

A second goal involves the design of a controller to track the periodic behavior of switching affine systems. Consider the following structure of a system to be controlled:
\begin{align}  \label{eq:pwadeftracking} 
\dot{x}_c(t)=A_ix_c(t) +b_i + B u_c(t), ~\text{for}~ x_c(t) \in P_i.
\end{align}
Assume that a reference trajectory $x_r(t)$ to be tracked by control of \eqref{eq:pwadeftracking} is the solution of:
\begin{align}  \label{eq:reference} 
\dot{x}_r(t)=A_jx_r(t) +b_j +Bu_r(t)~\text{for}~ x_r(t) \in P_j,
\end{align}
for an initial state  $x_r(0) \in  \mathbb{R}^{n_x}$. In here, let $u_r(t)$ represent a piecewise continuous feed-forward control signal generating the reference, allowing that $j\in\{1, \ldots, n_P\}$ temporarily differs from $i$. Under these assumptions, the following reference tracking problem is addressed:
\begin{definition}[Reference Tracking Problem]
For any initial state  $x_c(0) \in   \mathbb{R}^{n_x}$ and any reference $x_r(t)$ generated by \eqref{eq:reference}, determine a control law for $u_c(t)$ in  \eqref{eq:pwadeftracking} for which:
 \begin{align} \label{eq:converge_blended}
 \lim_{t \to +\infty}||x_c(t) - x_r(t)|| =0 .
\end{align}
\hfill $\Box$
\end{definition}

\section{Synthesis of Globally Stable Limit Cycles via Switching Affine Systems}
\label{synthesis}
\subsection{Method to Partition the State Space}
\label{Partition}
To fully exploit the degrees of freedom provided by system \eqref{eq:pwadef} in assigning distinct affine dynamics to the regions while tracking the set $F$ of samples in $\mathbb{R}^{n_x}$, a method for partitioning the state space into $n_p$ polytopes is introduced first. In particular, the cases of $n_x =2$, $n_x =3$, and $n_x> 3$ are addressed separately in the sequel. The idea of this section follows the principle that $C_1 x = d_1$ and $C_{2} x = d_{2}$, with $C_1, C_{2} \in \mathbb{R}^{1 \times n_x}$ and $d_1, d_{2} \in \mathbb{R}$, intersect (at a point $x_s$ in the case $n_x=2$, in a line for $n_x=3$, a plane for $n_x=4$, and so on). Consequently, two bounding rays can be defined:
\begin{align}
&R_1:= \{ x \in  \mathbb{R}^{n_x}  |  C_1x=d_1, C_{2}x  \le d_{2} \},  \label{eq:ray1} \\
& R_{2}:=  \{ x \in  \mathbb{R}^{n_x}   |  C_{2}x=d_{2},C_1x  \ge d_1 \}  \label{eq:ray2},
\end{align}
which intersect in a point $x_s$ with $C_1 x_s =d_1=C_2 x_s=d_2$. These rays form the boundaries of the polyhedral cone:
\begin{align}
P_1:=\{x \in  \mathbb{R}^{n_x} ~ | ~ C_1x  \ge d_1,~~  C_{2}x  < d_{2}\}.
\end{align} Based on this idea, the proposed procedure determines a specific state space partition $P_i\subseteq\mathbb{R}^{n_x},~i\in \{1,\dots,n_p\}$ in the form of \eqref{eq:partition}, which is  parametrized by $C_i\in \mathbb{R}^{1\times n_x}$, $d_i\in \mathbb{R}$ with  $C_{n_p+1}=C_1$ and $d_{n_p+1}=d_1$. This partition is particular in the sense that the finite number $n_p$ of polytopes $P_i$ and bounding planes $C_i,d_i$ is the same.\\ 

\underline{Case $n_x =2$}: Let a \textit{center point} $x_s$ of all points in $F$ be determined by:
\begin{align}\label{eq:middlepoint}
x_{s,[q]}=\frac{1}{2}\left(\max_{l \in \{1,\ldots, n_F\}} \tilde{x}_{l, [q]} - \min_{l \in \{1,\ldots, n_F\}} \tilde{x}_{l, [q]}\right)
\end{align}
for the two dimensions $q\in\{1, 2\}$. Assuming that $x_s$ does not coincide with any point in $F$, a subset of $n_p$ sample points $\hat{x}_{1},\ldots,  \hat{x}_{n_p}$ is selected from $F$.
While details on the selection of this subset can be found in Remark \ref{Remark2} at the end of this subsection, note at this point that for $n_F\gg n_x$ and a sufficiently small $\Delta t$, this subset can be selected to construct a polyhedral partition: For each point $\hat{x}_{i}$, a unique line $C_i x=d_i$ with $C_i \in \mathbb{R}^{1 \times 2}$, $d_i \in \mathbb{R}$ can be determined, which passes through both $x_s$ and $\hat{x}_{i}$ and serves as the boundary between two adjacent cones $P_{i-1}$ and $P_{i}$, as illustrated in Fig.~\ref{fig:partition}. The set of the $\hat{x}_{i}$ also partitions the sample set according to $F=\bigcup_{i\in{1,\ldots,n_p}} F_i$ where each ordered subset $F_i$ contains the samples from $\hat{x}_{i}$ to the one sample in $F$ before $\hat{x}_{i+1}$ (using $\hat{x}_1=\hat{x}_{n_p+1}$).\\

\begin{figure}[!t]
  \begin{center}
		\psfrag{s}[rc][rc][1]{$x_s$}
			\psfrag{x1}[rc][rc][1]{$\hat{x}_{1}$}	
   \psfrag{x2}[rc][rc][1]{$\hat{x}_{2}$}			
	   \psfrag{x3}[l][l][1]{$\hat{x}_{3}$}			
	      \psfrag{x4}[rc][rc][1]{$\hat{x}_{4}$}			
	         \psfrag{x5}[rc][rc][1]{$\hat{x}_{5}$}		
	         
				\psfrag{p1}[rc][rc][1]{$P_{1}$}	         
	 	\psfrag{p2}[rc][rc][1]{$P_{2}$}	
	 		\psfrag{p3}[rc][rc][1]{$P_{3}$}	
	 			\psfrag{p4}[rc][rc][1]{$P_{4}$}	
	 				\psfrag{p5}[rc][rc][1]{$P_{5}$}

 \includegraphics[width = 0.4\textwidth]{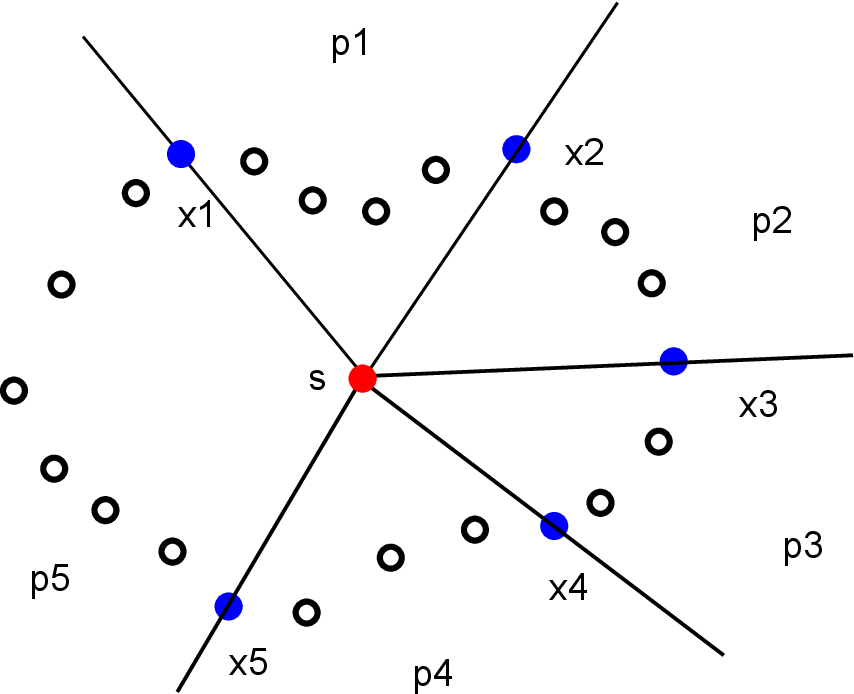}
    \caption{\cite{icinco25} Based on the set of samples $F$, the center point $x_s$, and a set of selected points $\hat{x}_{1}, \ldots, \hat{x}_{n_P} \in F$ (each representing the first state of any subset $F_1, \ldots, F_{n_P}$ along the limit cycle), the lines for partitioning the state space into regions $P_i$ are determined.}
    \label{fig:partition}
  \end{center}
\end{figure}

\underline{Case $n_x =3$}: Let the  center point $x_s$  again be determined by \eqref{eq:middlepoint}. Then,  a plane $\Omega^* x= \epsilon^*$ with normal vector $\Omega^*  \in  \mathbb{R}^{1 \times 3}$,  $\epsilon^* \in  \mathbb{R}$, is determined  by:
    \begin{align} \label{eq:optimalplane} 
   & (\Omega^*,  \epsilon^*):= \argmin_{\Omega,  \epsilon} \sum_{i=1}^{n_F} \|\Omega \tilde x_i- \epsilon \|_2,~~ \text{s.t.}~\Omega x_s=\epsilon
    \end{align}
Based on the outcome of \eqref{eq:optimalplane}, a line:
\begin{align}  \label{eq:3dline} 
\Gamma:=\{x \in  \mathbb{R}^3 ~| ~ x= x_s + \eta \Omega^*, ~ \eta \in \mathbb{R} \}
\end{align} 
which contains $x_s$ and has the orientation of  $\Omega^*$ is obtained. The plane $\Omega^* x= \epsilon^*$ is characterized by the fact that it contains $x_s$ and the accumulated distance between the sample points in $F$ to the plane is minimal. If no point in $F$ lies\footnote{If this condition does not hold, one may resolve the issue by slightly changing $\Omega^*$.} within $\Gamma$, a set of $n_p$ sample points $\hat{x}_{1},\ldots, \hat{x}_{n_p}$ is selected from $F$. For each of these points $\hat{x}_{i}$, a unique plane is determined that contains both $\hat{x}_{i}$ and the line $\Gamma$. Provided the plane does not contain any other sample point from $F$, it is  used to define the boundary between two adjacent regions $P_{i-1}$ and $P_{i}$. By this construction, a partition as the one shown in Fig.~\ref{fig:meinBild}b)
is obtained, while those are avoided which do not show a common intersection line, as in the example of Fig.~\ref{fig:meinBild}a).\\ 

 \begin{figure}[t!]
    \centering
    \psfrag{A}[][]{$a)$}
    \psfrag{B}[][]{$b)$}
    \psfrag{a}[][]{$x_s$}
    \psfrag{aa}[][]{$\Omega^* $}
    \psfrag{a1}[][]{$\Gamma$}
    \psfrag{a2}[][]{$P_{5}$}
    \psfrag{a3}[][]{$P_{6}$}
    \psfrag{a4}[][l]{$P_{1}$}
    \psfrag{a5}[][]{$P_{2}$}
    \psfrag{a6}[][]{$P_{3}$}
    \psfrag{a7}[][]{$P_{4}$}
    \psfrag{p1}[][l]{$\hat{x}_{1}$}
    \psfrag{p2}[][]{$\hat{x}_{2}$}
    \psfrag{p3}[][l]{$\hat{x}_{3}$}
    \psfrag{p4}[][]{$\hat{x}_{4}$}
    \psfrag{p5}[][l]{$\hat{x}_{5}$}
    \psfrag{p6}[][]{$\hat{x}_{6}$}
    \includegraphics[width=7.25cm,height=4.00cm]{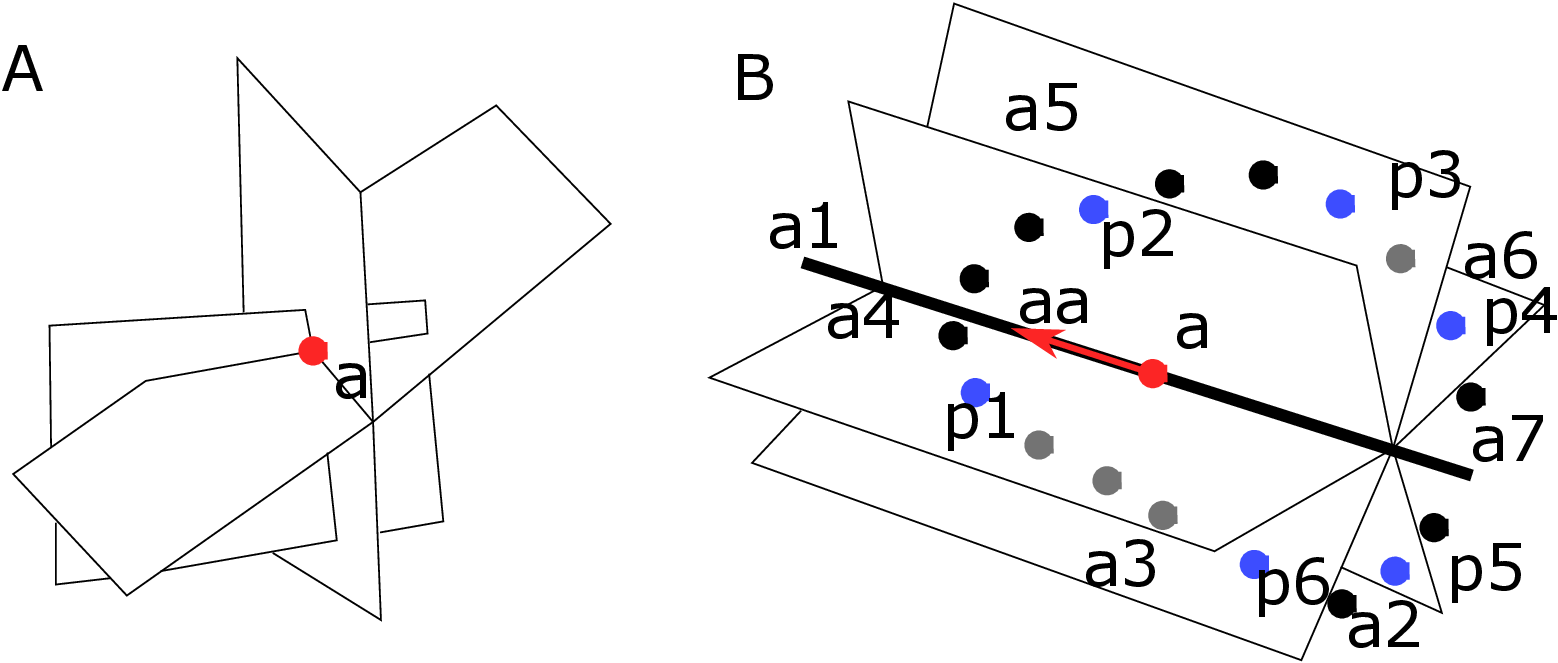} 
    \caption{\cite{icinco25} In the case $n_x=3$, the partition shown in Fig.~\ref{fig:meinBild} a) fails to meet requirement \eqref{eq:3dline} and the numbers of regions and partitioning planes are different. Conversely, the procedure described above yields an admissible partition for the scenario shown in Fig.~\ref{fig:meinBild} b).} 
    \label{fig:meinBild}
\end{figure}

\underline{Case $n_x >3$:} Determine again $x_s$, the vector $\Omega^*  \in  \mathbb{R}^{1 \times n_x}$, and the line $\Gamma$ according to \eqref{eq:middlepoint} to  \eqref{eq:3dline}. Also, let $n_p$ sample points $\hat{x}_{1},\ldots,  \hat{x}_{n_p}$ be selected from $F$. However, for any given sample point $\hat{x}_{i}$, the choice of a hyperplane defined as $n_x-2$-dimensional subspace containing $\hat{x}_{i}$ and the line $\Gamma$ is not unique.

To resolve this issue, a set of linearly independent  vectors $\Omega_2, \ldots, \Omega_{n_x-2}$ with $\Omega_j \in \mathbb{R}^{1 \times n_x}$, are identified, which have to be linearly independent of $\Omega^*$. Based on these vectors, a hyperplane $\Psi$ can be determined within the $n_x-2$ dimensional subspace as follows:
\begin{align}  \label{eq:highdimensionplane} 
\Psi:=\{x \in  \mathbb{R}^{n_x} |  x= x_s +  \eta_1 \Omega^*+ \sum\limits_{j=2}^{n_x-2}\eta_j\Omega_j, ~ \eta_j \in \mathbb{R} \}.
\end{align}
Assume that $\Psi$ does not contain any points from the set $F$. Then, for each sample point $\hat{x}_{i}$, a unique hyperplane within the $(n_x-1)$-dimensional subspace can be determined that contains both $\Psi$ and $\hat{x}_{i}$. Provided such a hyperplane does not contain any other sample point from $F$, it is then selected to be the boundary $R_{i-1,i}:= \{ x \in \mathbb{R}^{n_x} \mid C_i x = d_i\}$ between the adjacent regions $P_{i-1}$ and $P_{i}$.

\begin{remark}
\label{Remark1}
The proposed procedure excludes a-priori sets $F$ that are likely to result in poor approximations. One such case is illustrated in Fig.~\ref{fig:meinBild2} a), where the condition that $C_i x = d_i$ contains only one sample point $\hat x_i$ and the line $\Gamma$ is not satisfied. The second case, shown in Fig.~\ref{fig:meinBild2} b), corresponds to a limit cycle that intersects itself. Although the proposed partitioning procedure may succeed in this scenario, it becomes clearly evident (e.g., in region $P_3$) that no affine system can be identified that adequately captures the opposing directions of motion of the trajectories. This limitation motivates the exclusion of limit cycles with self-intersections, as stated in Section \ref{section2}. \hfill$\Box$
\end{remark}
\begin{figure}[!t]
    \centering
    \psfrag{x1}[][]{$x_1$}
    \psfrag{x2}[][]{$x_3$}
    \psfrag{x3}[][]{$x_2$}
    \psfrag{A}[][]{$a)$}
    \psfrag{B}[][]{$b)$}
    \psfrag{a}[][]{$x_s$}
    \psfrag{aa}[][]{$\Omega^* $}
    \psfrag{a1}[][]{$\Gamma$}
    \psfrag{a4}[][l]{$P_{1}$}
    \psfrag{a5}[][]{$P_{2}$}
    \psfrag{a6}[][]{$P_{3}$}
    \psfrag{a7}[][]{$P_{4}$}
    \psfrag{h1}[][]{$\hat{x}_{1}$}
    \psfrag{h2}[][]{$\hat{x}_{2}$}
    \psfrag{h3}[][]{$\Gamma$}
    \psfrag{h4}[][r]{$\Omega^* x= \epsilon^*$}
    \psfrag{h5}[][r]{$C_ix=d_i$}
    \includegraphics[width=7.25cm,height=4.00cm]{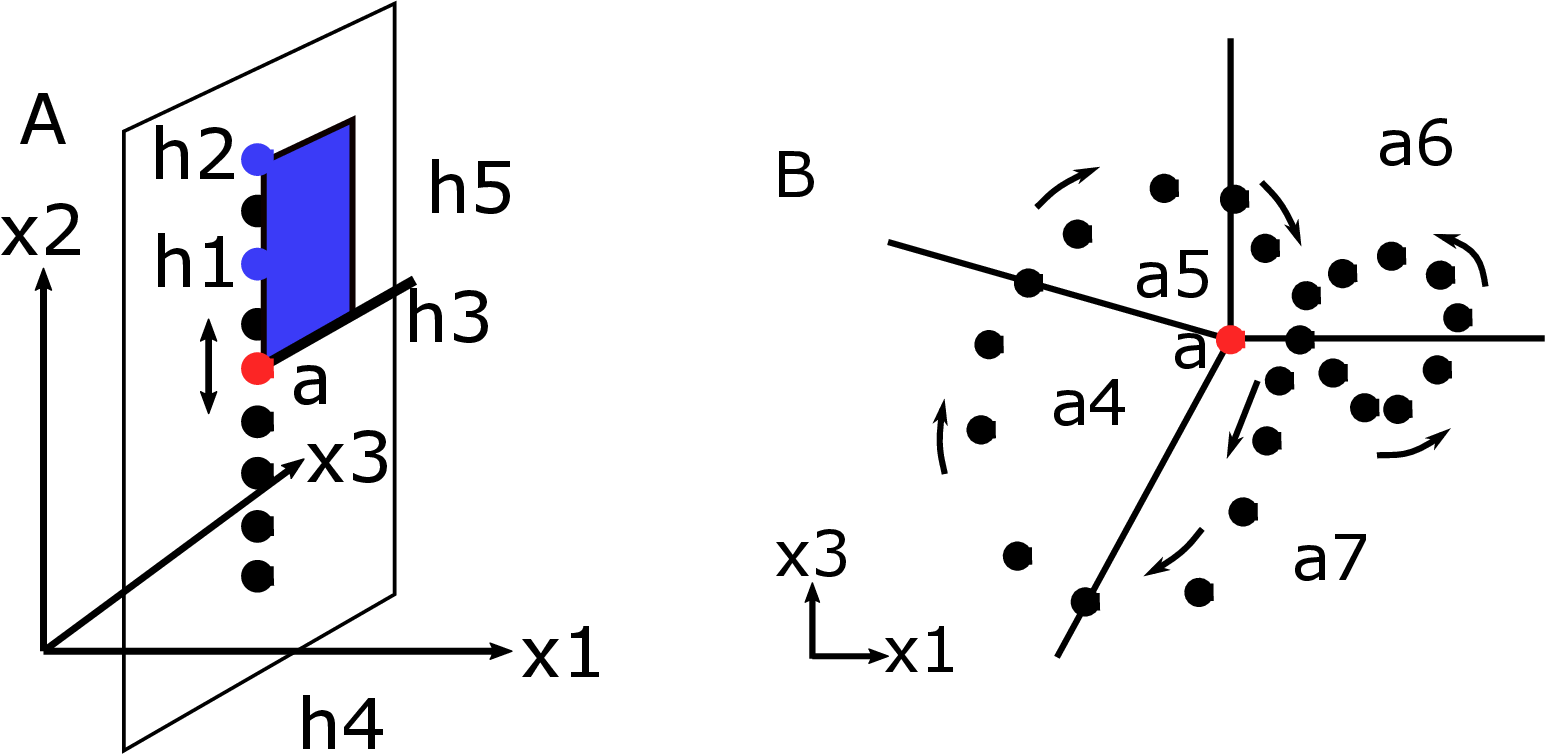} 
    \caption{
    \cite{icinco25} Certain configurations of $F$ are incompatible with the partitioning procedure. In example $a)$, the condition is violated that $C_ix=d_i$ should contain only a sample point $\hat x_i$ and $\Gamma$. In example $b)$, the identification of $A_3, b_3 \in P_3$ and $A_4, b_4 \in P_4$ is not feasible.} 
    \label{fig:meinBild2}
\end{figure}

\begin{remark}
\label{Remark2}
 The $n_p$ sample points of the sequence $\{\hat{x}_i\}_{i=1}^{n_p}$ with timestamps $\{\hat t_i\}_{i=1}^{n_p}$ are selected from $F$ such that all time intervals satisfy:
\begin{equation}
\Delta \hat t_i := |\hat t_{i+1} - \hat t_i| < 1 \quad \forall i \in \{1,\dots,n_p\}
\end{equation} with $\hat t_{n_p+1}\equiv t_1$.
Although this choice is not strictly required for partitioning the state space, it enables an approximation of the reconstructed limit cycle with relatively high quality, as will be further detailed in in Sec.~\ref{approximation}. \hfill$\Box$
\end{remark}

\subsection{Construction of the Dynamics}
\label{contractive}
Let the number $n_F$ and the partition of the state space into regions $P_i$ be obtained from the procedure in section \ref{Partition}. Following \cite{pavlov2007convergence}, the switching affine system \eqref{eq:pwadef} is termed \emph{contractive} if the following conditions are satisfied:
\begin{itemize}
\item Condition 1: $A_ix+b_i = A_{i+1}x+b_{i+1}$ holds for all $x$ on the boundary $C_ix  = d_i$ for $i \in\{1, \ldots, n_P\}$ and with $A_{n_p+1}=A_1$,  $b_{n_p+1}=b_1$.
\item Condition 2:  $A^T_iQ+ QA_i \prec 0$, $i \in\{1, \ldots, n_P\}$ holds for a positive-definite matrix $Q \succ 0$.
\end{itemize} 
The first condition requires the continuity of the gradient of the autonomous dynamics on the switching boundaries, whereas the second condition implies the existence of a common Lyapunov function valid in all regions. The property of contractivity, on which the synthesis procedure in this paper is based, is now set in relation to limit cycles by the following result from literature:
\begin{lemma}  \label{lemma:contractive} (Contractive switching affine systems \cite{demidovich1967lectures,pavlov2007convergence})
If the system \eqref{eq:pwadef}  is contractive with a non-zero vector  $B$, then for any piecewise continuous periodic signal $u(t)$ with a period $T$, the solution $x(t)$ with $t\geq 0$ starting from an arbitrary $x(0)\in \mathbb{R}^{n_x}$ always converges to a unique limit cycle with the same period  $T$.  
\hfill$\Box$ 
\end{lemma}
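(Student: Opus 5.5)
The plan is the classical Demidovich/Pavlov argument. First, Condition~1 makes the autonomous vector field $f(x):=A_ix+b_i$ for $x\in P_i$ continuous and globally Lipschitz. Continuity across each boundary $C_ix=d_i$ means $(A_{i+1}-A_i)x=b_i-b_{i+1}$ on that hyperplane, so the difference of adjacent affine maps vanishes there. Hence $f$ is a continuous piecewise-affine function on a finite polyhedral partition, and its Lipschitz constant is $\max_i\|A_i\|$. Consequently, for any piecewise continuous $u(t)$, the system $\dot x=f(x)+Bu(t)$ has unique, globally defined (Carathéodory) solutions, with no sliding modes or Zeno behaviour.

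The central step is an incremental Lyapunov estimate with $V(x,y)=(x-y)^TQ(x-y)$, where $Q$ is the common matrix from Condition~2. The key claim is a one-sided Lipschitz (Demidovich) condition:
\[
(x-y)^TQ\bigl(f(x)-f(y)\bigr)\le -\alpha\,(x-y)^TQ(x-y)\quad\text{for all }x,y,
\]
for some $\alpha>0$. To prove it, parametrize the segment $z(s)=y+s(x-y)$, $s\in[0,1]$. Since $f$ is continuous and the segment crosses finitely many hyperplanes, $f(x)-f(y)$ can be written as $\int_0^1 A_{\sigma(s)}(x-y)\,ds$, where $\sigma(s)$ is the region active at $z(s)$. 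Condition~2 gives $A_i^TQ+QA_i\preceq-2\alpha Q$ uniformly over the finitely many $i$, so integrating yields the inequality. Because the input enters additively through the same $B$, it cancels in differences. Two solutions $x(t),y(t)$ driven by the same $u$ therefore satisfy $\dot V\le-2\alpha V$, so $\|x(t)-y(t)\|\le c\,e^{-\alpha t}\|x(0)-y(0)\|$.

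It remains to show that a $T$-periodic solution exists and attracts everything. With $u$ $T$-periodic, define the Poincaré (period) map $\Phi(x_0)=x(T;x_0)$. The previous step shows $\Phi$ is a contraction in the $Q$-norm with factor $e^{-\alpha T}<1$, so Banach's fixed-point theorem gives a unique fixed point $x^*$. The solution from $x^*$ is $T$-periodic, because the system is $T$-periodic in time and solutions are unique; this is the limit cycle in the sense of Definition~\ref{Definition1}. Applying the exponential estimate with $y(t)=x^*(t)$ gives global convergence of every trajectory, and also uniqueness, since two periodic solutions at distance that never decays would contradict the estimate.

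The period is exactly $T$ rather than a proper divisor, provided $u$ is nonconstant with minimal period $T$ and $B\neq0$ maps its variation nontrivially. Otherwise the periodic orbit could degenerate to an equilibrium, and the hypothesis ``non-zero $B$'' is presumably meant to exclude this. I expect the main obstacle to be the rigorous segment-integration step, which must justify the representation of $f(x)-f(y)$ when the segment lies partly inside a switching hyperplane or touches the intersection set of the cones. I would handle that case by a perturbation and continuity argument in $x,y$.
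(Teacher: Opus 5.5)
Your proposal is correct, and it is essentially the classical Demidovich/Pavlov argument that the paper invokes only by citation, without giving a proof of its own: continuity plus the common $Q$ yields an exponential incremental bound, and contraction of the period map then gives the unique, globally attracting $T$-periodic solution; your segment decomposition of $f(x)-f(y)$ is also the device the paper itself uses with $\breve x(t)$ in the proof of Theorem~\ref{maintheorem}. The obstacle you anticipate is milder than you fear: each $P_i$ is an intersection of two half-spaces and hence convex, so the segment meets each region in an interval and $s\mapsto f(z(s))$ is continuous and affine on finitely many subintervals, which gives the integral representation directly without a perturbation argument.
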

To encode the requirement of continuous gradients on the switching boundaries, the following equality constraints for the synthesis of $(A_i,b_i)$, $i \in\{1, \ldots, n_P\}$, are proposed:
\begin{align}\label{eq:continuousgradient}
A_{i,[q]}\hspace{-0.8mm}- \hspace{-0.8mm}A_{i+1,[q]} \hspace{-0.8mm}= \hspace{-0.8mm}\alpha_{i,[q]} C_i,~~b_{i,[q]}\hspace{-0.8mm}- \hspace{-0.8mm}b_{i+1,[q]} \hspace{-0.8mm}= \hspace{-0.8mm}\alpha_{i, [q]}d_i
\end{align}
for  $\alpha_{i, [q]} \in  \mathbb{R}$, $q \in\{1,\ldots, n_x\}$, and with $A_{i,[q]}$ representing the $q$-th row of $A_i$. Here, $\alpha_{i, [q]}$ is a scalar degree of freedom, used in the subsequent optimization, which allows the dynamics to differ on either side of the switching boundaries while guaranteeing a continuous transition.
The condition for the existence of a common Lyapunov function constitutes a nonlinear matrix inequality involving the matrices $A_1, \ldots, A_{n_P}$ and $Q$. If these constraints are satisfied, system \eqref{eq:pwadef} is guaranteed to possess a globally stable limit cycle as defined in Def.~\ref{Definition1} and Def.~\ref{Definition2} with a period $T$, provided the signal $u(t)$ is also periodic with the same period length (see Lemma \ref{lemma:contractive}). Thus, the task to be solved is one of determining the system \eqref{eq:pwadef} such that its limit cycle approximates the sample points in $F$ in terms of position and time.

\subsection{Approximation of the Sample Set $F$}
\label{approximation}
To explain the approximation of the subset $F_i$ by the dynamics assigned to $P_i$, the procedure is exemplarily described for $F_1$. The scheme equivalently transfers to the remaining sets $F_i$, $i\in\{2,\ldots,n_P\}$. For the set $F_1=\{\hat{x}_{1}, \tilde{x}_{2}, \ldots, \tilde{x}_{n_1}\}$ containing $n_1$ points and for $\hat{x}_{1} = \tilde{x}_{1}$, the following cost functional is defined:
\begin{align}  \label{eq:localcost} 
&J_1:=\sum_{j=2}^{n_1} ||   e^{A_1 (j-1)\Delta t}\hat{x}_1 +\hspace{-0.8mm}\int_{0}^{ (j-1)\Delta t}\hspace{-0.8mm}\hspace{-0.8mm}e^{A_1( (j-1)\Delta t- \tau)} (b_1 +B u(\tau))  ~d \tau    -            \tilde{x}_{j}      ||^2_2.  
\end{align}
It records the  difference between the reachable points of  $\dot{x}(t)=A_1x(t) +b_1+Bu(t)$ (starting from $\hat{x}_{1}$) and the sampled points in  $F_1$ at each  sampling time\footnote{For sampled states in $F$ with non-uniform but known sampling times, only the corresponding times in \eqref{eq:localcost} need to be adjusted.}.
The costs $J_1$ are minimized for a given signal $u(t)$ in order to synthesize $A_1$, $b_1$, and $B$. The solution is challenging due to the nonlinearity introduced by the matrix exponential function $e^{A_1 t}$. According to the Taylor series:
\begin{align}  \label{eq:taylor} 
e^{A_1t} \hspace{-0.8mm}= \hspace{-0.8mm} I_{n_x}  \hspace{-0.8mm}+  \hspace{-0.8mm}A_1t \hspace{-0.8mm} +  \hspace{-0.8mm}\frac{1}{2!}A^2_1t^2  \hspace{-0.8mm}+  \hspace{-0.8mm}\ldots  \hspace{-0.8mm}\approx  \hspace{-0.8mm}I_{n_x}  \hspace{-0.8mm}+  \hspace{-0.8mm} \sum_{j=1}^{n_d} \frac{1}{j!}A^j_1t^j,
\end{align}
the value of $e^{A_1t}$ can be approximated by the right-hand side of 
\eqref{eq:taylor} with  sufficiently high order $n_d$. For higher state dimensions $n_x$, a large value $n_d$ would, however, significantly increase the complexity of the optimization due to a higher-order nonlinearity, while a smaller order (such as $n_d \le 3$) would possibly result in considerable approximation errors. To address this issue, countermeasures based on the following observations are considered:
\begin{itemize}
\item For a fixed matrix $A_1$, the approximation error in \eqref{eq:localcost} remains small for small times $t$. In particular for $t < 1$, the series $t^j$, $j \in \{n_d, n_d+1, \ldots\}$ in the neglected terms converges to zero, meaning that omitting these terms contributes only very little to the approximation error.
 
\item For a fixed time $t$ and if the spectrum of $A_1$ lies within the unit circle, the matrices $A^j_1$, $j \in \{n_d, n_d+1, \ldots\}$, in the neglected terms also converge to zero, what as well results in small errors.
\end{itemize}

The first countermeasure is included into the partitioning procedure described in Sec.~\ref{Partition}, Remark \ref{Remark2} by selecting sample points on the boundaries such that the condition $j \cdot \Delta t < 1$ holds for all $j \in\{ 1, \ldots, n_1\}$ in equation \eqref{eq:localcost}. Note that a larger number $n_p$ of regions reduces in average the transition time from one boundary to the next. This leads to smaller approximation errors in the Taylor series expansion for a given order $n_d$, albeit at the cost of requiring the synthesis of more pairs $(A_i, b_i)$. The second countermeasure is established by ensuring that the largest singular value of $A_1$ is less than one, a condition that can be guaranteed by enforcing the nonlinear matrix inequality:
\begin{align}  \label{eq:singularvalue} 
A^T_1A_1 \prec I_{n_x}.
\end{align}
The condition \eqref{eq:singularvalue} forces the eigenvalues of $A_1$ to lie inside the left half of the unit circle, as the contraction condition additionally requires the real parts of the eigenvalues of $A_1$ to be negative. As a result, the convergence rate of system \eqref{eq:pwadef} in region $P_1$ is also bounded by one. This can be counterproductive if the sample points to be tracked in $F_1$ indicate that the state of the sampled limit cycle changes at a significantly different rate in a particular region of the state space. Therefore, the inclusion of \eqref{eq:singularvalue} should be considered an optional measure, or it should be replaced by a less conservative condition, such as $A^T_1 A_1 \prec \beta I_{n_x}$ for some $\beta > 1$. The described countermeasures affect only the approximation quality without compromising the contraction property guaranteed by Lemma \ref{lemma:contractive}. A detailed analysis of the upper bound on the approximation error can be found in \cite{higham2009scaling,kenney1998schur}.

The periodic signal $u(t)$ can be chosen, e.g., as a  piecewise constant scalar ($n_u=1$) for $k\in\{0,1,2,\ldots\}$:
\begin{align}  \label{eq:signal} 
u(t)=\begin{cases}
-1,~t<[kT, (k+\frac{1}{2})T)\\
1,~t<[(k+\frac{1}{2})T, (k+1)T).
    \end{cases}
\end{align}
The integral part of \eqref{eq:localcost} then leads to  the analytic expression:
\begin{align*}  
&\int_{0}^{t}e^{A_1(t-\tau)} (b_1 +B u(\tau))  ~d \tau = \notag\\
&\begin{cases}
 (e^{A_1t}-I_{n_x})A_1^{-1}(b_1 \hspace{-0.8mm}- \hspace{-0.8mm}B) ~\text{for} ~ 0 \le t <\frac{1}{2}T \\
 (e^{A_1\frac{T}{2}} \hspace{-0.8mm}- \hspace{-0.8mm}I_{n_x})A_1^{-1}(b_1 \hspace{-0.8mm}- \hspace{-0.8mm}B) \hspace{-0.8mm} + \hspace{-0.8mm}  (e^{A_1(t-\frac{1}{2}T)} \hspace{-0.8mm}- \hspace{-0.8mm}I_{n_x}) A_1^{-1}(b_1 \hspace{-0.8mm}+ \hspace{-0.8mm} B)
  ~\text{for} ~\frac{1}{2}T \le t <T,
    \end{cases}
\end{align*}
what is particularly beneficial for applying the previously discussed countermeasures in synthesis, see Section~\ref{optimization}. 
Note that the matrix $A_1$ is always invertible, as  the contractivity condition  implies that $A_1$ must be Hurwitz.

\subsection{Overall Optimization Problem}
\label{optimization}
Assuming the state space is partitioned as described in Section \ref{Partition} and that $u(t)$ is a periodic signal according to \eqref{eq:signal}, the synthesis of the switching affine system \eqref{eq:pwadef} can be carried out by solving the following optimization problem:
\begin{align}
    &~~~~~~ \min_{A_i, b_i, \alpha_i, i \in\{1,\ldots,n_p\}, B, Q} \sum_{i=1}^{n_P} J_i \label{OptP1}\\
   & \text{s.t.}~ \text{for all}~  i \in\{1,\ldots,n_p\}: \notag \\
 & \text{constraints} ~\eqref{eq:continuousgradient} ~\forall q \in \{1,\ldots, n_x\}, \label{OptP1c1}\\
     & A^T_iQ+ QA_i \prec 0,~Q \succ 0, ~B \ne 0, \label{OptP1c2}\\
     &A^T_iA_i \prec I_{n_x}~\text{(optional constraint)},\label{OptP1c3}\\
    & e^{A_i n_i \Delta t}\hat{x}_i \hspace{-0.8mm} +  \hspace{-1.2mm}\int_{0}^{  n_i\Delta t} \hspace{-1.2mm}e^{A_i( n_i \Delta t - \tau)} (b_i  \hspace{-0.8mm}+  \hspace{-0.8mm}B u(\tau))  d \tau   \hspace{-0.8mm} =  \hspace{-0.8mm}\hat{x}_{i+1}.  \label{OptP1c4}
\end{align}
In this context, $J_i$ is defined for each region in  in the same form as $J_1$ in \eqref{eq:localcost}. This constitutes a nonlinear optimization problem with a total of $(n_P+1)n^2_x + (2n_P+1)n_x$ variables. The cost functional minimized in problem \eqref{OptP1} measures the distance between each sample point in $F$ and the corresponding point on the limit cycle of system \eqref{eq:pwadef} at the same time instant. The constraints \eqref{OptP1c1} and \eqref{OptP1c2} guarantee that the resulting system \eqref{eq:pwadef} is contractive. The optional constraint \eqref{OptP1c3} aims to achieve a satisfactory approximation by employing the Taylor expansion for the matrix exponential function in expressions \eqref{OptP1} and \eqref{OptP1c4}. The final constraint aims to ensure that the set of sample points $\hat{x}_{i}$, $i \in\{1,\ldots,n_P\}$ located on the boundaries are reached by the limit cycle of system \eqref{eq:pwadef}. Otherwise, the minimization of objective \eqref{OptP1} may only lead to a transient trajectory of system \eqref{eq:pwadef} to follow the sample points in $F$, rather than the limit cycle of \eqref{eq:pwadef}, due to the approximation error inherent in $e^{A_i t}$. In general, the optimization problem is not guaranteed to be feasible; however, increasing the number $n_p$ or selecting a different partition can contribute to finding a feasible solution. As an extension to account for transient behavior from arbitrary initial states converging to the limit cycle, an additional term can be included into \eqref{OptP1} to minimize the deviation of the model dynamics from the sampled transient trajectories. Finally, it is worth noting that measurement noise, which may have perturbed the samples in $F$, is effectively eliminated by solving the optimization problem defined by \eqref{OptP1} to \eqref{OptP1c4}, and by assigning the model \eqref{eq:pwadef} to any of the regions $P_i$.

\subsection{Numeric Examples in 2D and 3D}
The effectiveness of the proposed synthesis approach is assessed by first considering an example with a set $F$ consisting of nineteen points in the plane, shown as black circles in Fig.~\ref{Example1} and Fig.~\ref{Example2}. The period extracted from the sample set is $T=1.4$, which is also adopted for the periodic signal defined in \eqref{eq:signal}. Following the rules of Section~\ref{Partition}, the state space is initially divided by eight rays (depicted as solid blue lines), with:
\begin{align*}
   & C_1=\begin{bmatrix}
    1&-\frac{1}{4}
\end{bmatrix},~C_2=\begin{bmatrix}
    1&-2
\end{bmatrix}, ~
C_3=\begin{bmatrix}
    -1&-2
\end{bmatrix}, ~
C_4=\begin{bmatrix}
    -1&-\frac{1}{4}
\end{bmatrix},\\
&C_5=\begin{bmatrix}
    -1&\frac{1}{5}
\end{bmatrix},~
C_6=\begin{bmatrix}
    -1&3
\end{bmatrix}, ~
C_7=\begin{bmatrix}
    1&\frac{5}{2}
\end{bmatrix},~
C_8=\begin{bmatrix}
    1&\frac{1}{6}
\end{bmatrix}, \\
&d_1=\frac{1}{2},~d_2=-3, ~d_3=\hspace{-1mm}-5,~d_4=\hspace{-1mm}-\frac{3}{2}, ~d_5=\hspace{-1mm}-\frac{3}{5},~d_6=\hspace{-1mm}5,~d_7=\hspace{-1mm}6,~d_8=\hspace{-1mm}\frac{4}{3}
\end{align*}
using the center point $x_s=[1,2]^T$. Figure~\ref{Example1} shows the limit cycle of system \eqref{eq:pwadef} resulting from the solution of \eqref{OptP1}, with an approximation order of $n_d=4$ used for the matrix exponential function. The shape of the limit cycle provides clear evidence that a small $n_d$ introduces substantial approximation errors.
\begin{figure}[!b]
    \centering
    \small
    \psfrag{a}[][]{$x_1$}
    \psfrag{a1}[][]{$x_2$}
    \psfrag{50}[][]{$\scriptstyle 50$}
    \psfrag{-50}[][]{$\scriptstyle -50$}
    \psfrag{0}[][]{$\scriptstyle 0$}
    \psfrag{5}[][]{$\scriptstyle 5$}
    \psfrag{-5}[][]{$\scriptstyle -5$}
    \psfrag{10}[][]{$\scriptstyle 10$}
    \psfrag{c8}[][]{$C_8$}
    \psfrag{c7}[][]{$C_7$}
    \psfrag{c6}[][]{$C_6$}
    \psfrag{c5}[][]{$C_5$}
    \psfrag{c4}[][]{$C_4$}
    \psfrag{c3}[][]{$C_3$}
    \psfrag{c2}[][]{$C_2$}
    \psfrag{c1}[][]{$C_1$}
    \includegraphics[width=0.6\textwidth]{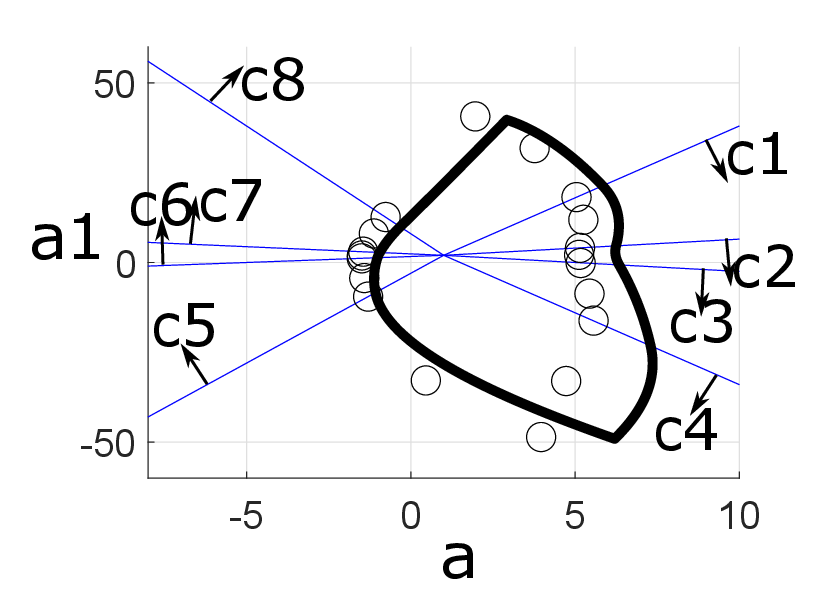}
    \caption{\cite{icinco25} Sample points (black circles), switching boundaries in solid blue, and the limit cycle (bold black line) of the switching system obtained for the model optimized with $n_d=4$.}
    \label{Example1}
\end{figure} To reduce these, the order is raised to $n_d=9$, and the solution of \eqref{OptP1}
with $T=1.4$ in the periodic signal \eqref{eq:signal} leads to the pairs $(A_i,b_i)$, $i\in\{1,\ldots,8\}$ as well as the limit cycle shown in black in Fig.~\ref{Example2}.

\begin{figure}[!t]
    \centering
    \small
    \psfrag{x1}[][]{$x_1$}
    \psfrag{x2}[][]{$x_2$}
    \psfrag{50}[][]{$\scriptstyle 50$}
    \psfrag{-50}[][]{$\scriptstyle -50$}
    \psfrag{0}[][]{$\scriptstyle 0$}
    \psfrag{5}[][]{$\scriptstyle 5$}
    \psfrag{-5}[][]{$\scriptstyle -5$}
    \psfrag{10}[][]{$\scriptstyle 10$}
    \psfrag{a8}[][]{$A_8,b_8$}
    \psfrag{a7}[][]{$A_7,b_7$}
    \psfrag{a6}[][]{$A_6,b_6$}
    \psfrag{a5}[][]{$A_5,b_5$}
    \psfrag{a4}[][]{$A_4,b_4$}
    \psfrag{a3}[][]{$A_3,b_3$}
    \psfrag{a2}[][]{$A_2,b_2$}
    \psfrag{a1}[][]{$A_1,b_1$}
    \includegraphics[width=0.6\textwidth]{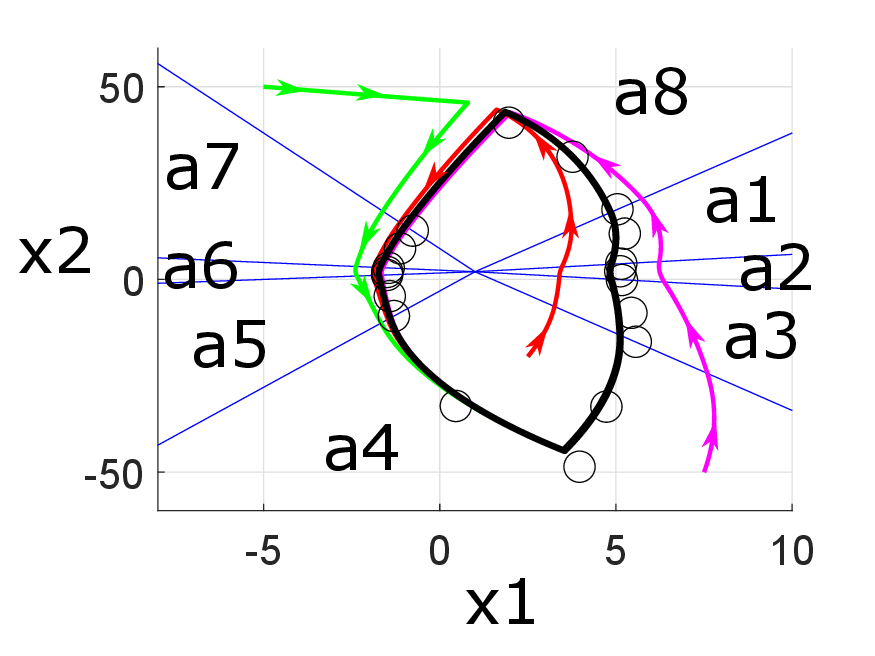}
    \caption{\cite{icinco25} Sample points (circles), switching boundaries (blue), limit cycle (bold black line), and trajectories from different initial points $x(0)=\begin{bmatrix}
        2.5&-20
    \end{bmatrix}^T$ red, $x(0)=\begin{bmatrix}
        -5&50
    \end{bmatrix}^T$ green, and $x(0)=\begin{bmatrix}
        7.5&-50
    \end{bmatrix}^T$ magenta, as obtained from the model synthesized with $n_d=9$.}
    \label{Example2}
\end{figure}
A significant reduction in the distance to the sampled points is achieved. The stability and uniqueness of the limit cycle are demonstrated by simulating trajectories from various initial points, initialized  inside and outside of the limit cycle. Note that the optional condition \eqref{OptP1c3} was omitted from this optimization, since the matrix exponential approximation with $n_d=9$ provided sufficient accuracy.
\begin{figure}[!ht]
    \centering
    \psfrag{x}[][]{$x$}
    \psfrag{time t}[][]{$t$}
    \includegraphics[width=0.6\textwidth]{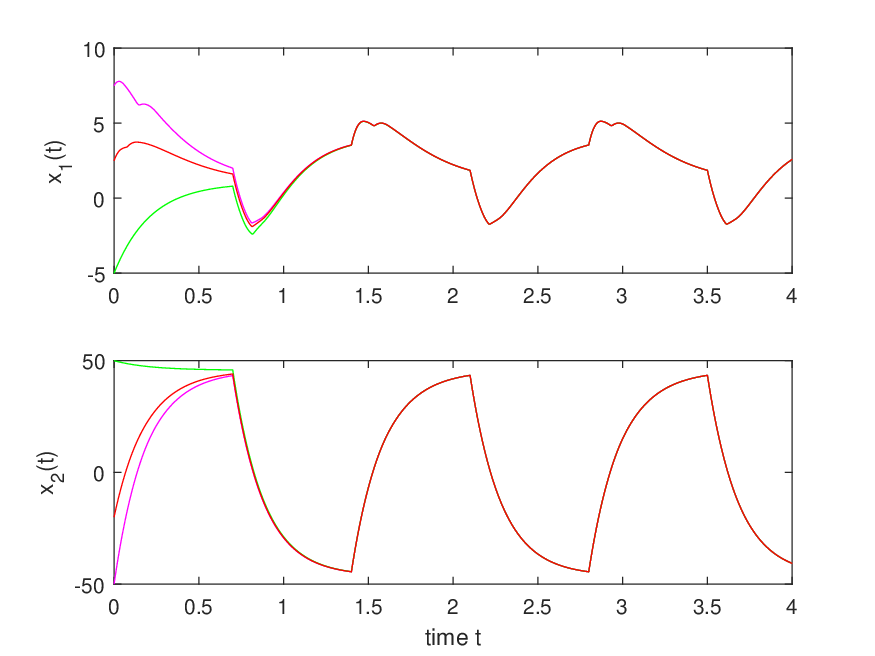}
    \caption{\cite{icinco25} Convergence of  $x_1(t)$ and $x_2(t)$ for different initial points $x(0)=\begin{bmatrix}
        2.5&-20
    \end{bmatrix}^T$ red, $x(0)=\begin{bmatrix}
        -5&50
    \end{bmatrix}^T$ green, and $x(0)=\begin{bmatrix}
        7.5&-50
    \end{bmatrix}^T$ magenta.}
    \label{Example2a}
\end{figure}

For a second test in a three-dimensional state space, the set of sample points shown in Fig.~\ref{Example3} (black circles) is examined. The procedure from Sec.~\ref{Partition} for $n_x=3$ was applied in conjunction with the first mentioned countermeasure (transition times between sample points on neighboring boundaries less than 1), leading to $n_p = 8$ subsets $F_i$ designed to minimize the transition time between each pair $\hat{x}_i$. The optimization problem \eqref{eq:optimalplane} was solved using the provided sample points and the center point $x_s = [1, 2, 0]^T$, resulting in:
\begin{align*}
    \Omega^*=\begin{bmatrix}
    -0.0115&-0.0066&0.9999
\end{bmatrix},~\epsilon^*=-0.0247,
\end{align*}
as shown by a green plane in Fig.~\ref{Example3}. The required line of intersection is $\Gamma:=\{x \in  \mathbb{R}^3 ~| ~ x= x_s + \eta \Omega^*, ~ \eta \in \mathbb{R} \}$, as marked by the solid blue line in Fig.~\ref{Example3}. Since no point in $F$ is contained in $\Gamma$, a unique plane
containing $\Gamma$ and $\hat x_i$ is determined for each sample point $\hat x_i$ to: 
\begin{align*}
&C_1=\begin{bmatrix}
    1&-\frac{1}{4}&0.0099
\end{bmatrix},~C_2=\begin{bmatrix}
    1&-2&-0.0017
\end{bmatrix},~C_3=\begin{bmatrix}
    -1&-2&-0.0247
\end{bmatrix},\\
&C_4=\begin{bmatrix}
    -1&-\frac{1}{4}&-0.0132
\end{bmatrix},~C_5=\begin{bmatrix}
    -1&\frac{1}{5}&-0.0102
\end{bmatrix},~C_6=\begin{bmatrix}
   -1&3&0.0083
\end{bmatrix}, \\
&C_7=\begin{bmatrix}
    1&\frac{5}{2}&0.0280
\end{bmatrix},~C_8=\begin{bmatrix}
    1&\frac{1}{6}&0.0126
\end{bmatrix},\\
&d_1\hspace{-1mm}=\hspace{-1mm}0.5,~d_2\hspace{-1mm}=\hspace{-1mm}-3,~d_3\hspace{-1mm}=\hspace{-1mm}-5,~d_4\hspace{-1mm}=\hspace{-1mm}-1.5,~d_5\hspace{-1mm}=\hspace{-1mm}-0.6,~d_6\hspace{-1mm}=\hspace{-1mm}5,~d_7\hspace{-1mm}=\hspace{-1mm}6,~d_8\hspace{-1mm}=\hspace{-1mm}1.33,
\end{align*}
see the solid blue planes in Fig.~\ref{Example3}.
\begin{figure}[!ht]
    \centering
    \psfrag{x}[][]{$x$}
    \psfrag{1}[][]{$1$}
    \psfrag{2}[][]{$2$}
    \psfrag{3}[][]{$3$}
    \includegraphics[width=0.6\textwidth]{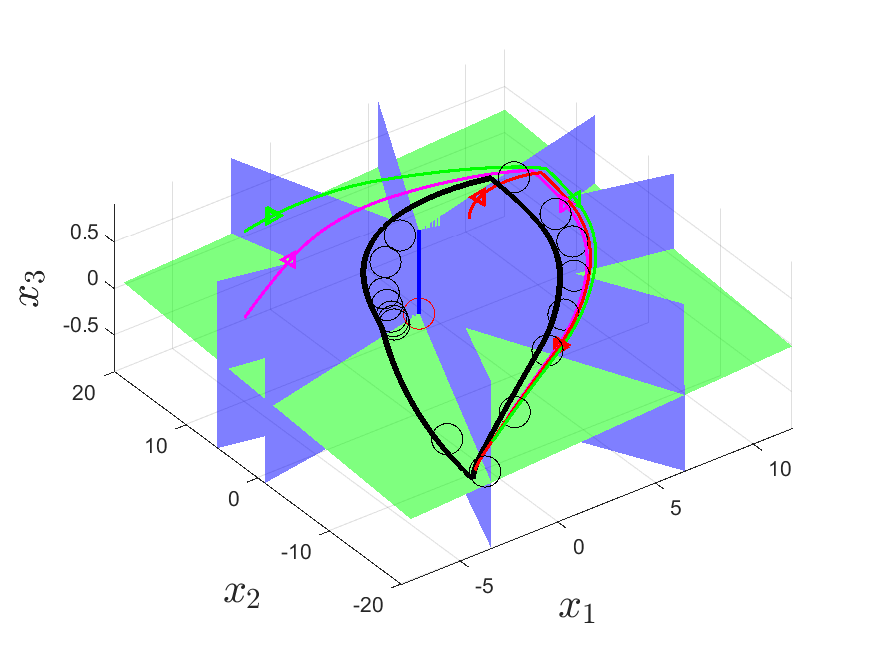}
    \caption{\cite{icinco25} Sample points (circles), green plane with minimum distance to the sampled points including $x_s$ (red circle), line of intersection $\Gamma$ (blue line), and boundaries shown as planes in blue, while the limit cycle is marked in black and the trajectories from different initial points $x(0)=\begin{bmatrix}
        6.5&10&0.1
    \end{bmatrix}^T$ in red, $x(0)=\begin{bmatrix}
        -1.9&18.5&0.18
    \end{bmatrix}^T$ green, and $x(0)=\begin{bmatrix}
        -5&10&0
    \end{bmatrix}^T$ magenta.}
    \label{Example3}
\end{figure}
The resulting limit cycle, obtained by solving the optimization problem from Section \ref{optimization}, is indicated in black in Fig.~\ref{Example3}. The distance to the sampling points remains acceptable, even though the dimensions operate on significantly different scales. This underscores the approximation quality of the approach. Trajectories from different initial points again demonstrate convergence to the unique limit cycle, as shown in Fig.~\ref{Example3} and Fig.~\ref{Example3a} (trajectories in red, green, and magenta).
\begin{figure}[!ht]
    \centering
    \psfrag{x}[][]{$x$}
    \psfrag{t}[][]{$t$}
    \includegraphics[width=0.6\textwidth]{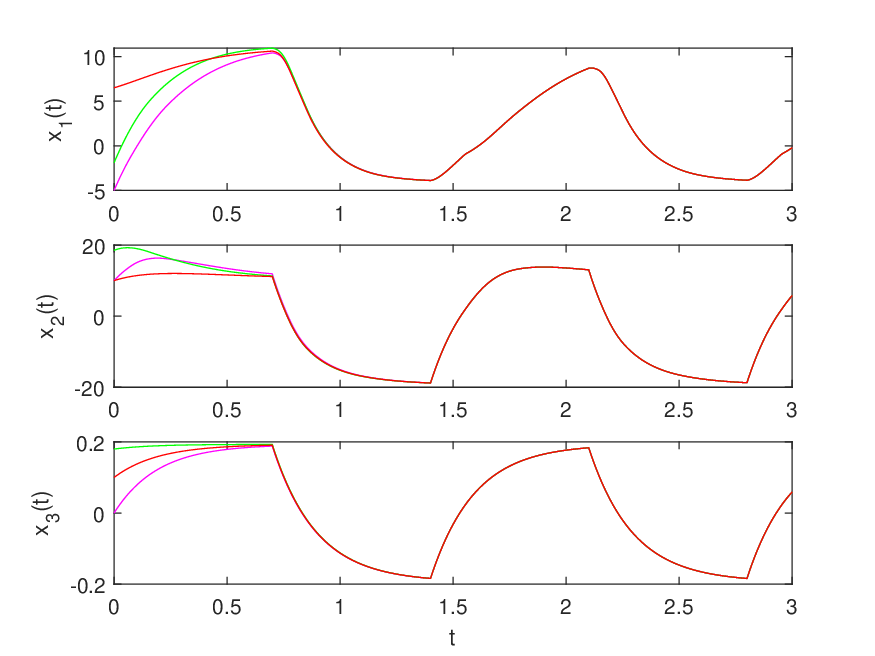}
    \caption{\cite{icinco25} Convergence of  $x_1(t),~\hspace{-0.5mm}x_2(t),~\hspace{-0.5mm}x_3(t)$ for different initial points $x(0)=\begin{bmatrix}
        6.5&10&0.1
    \end{bmatrix}^T$ red, $x(0)=\begin{bmatrix}
        -1.9&18.5&0.18
    \end{bmatrix}^T$ green, and $x(0)=\begin{bmatrix}
        -5&10&0
    \end{bmatrix}^T$ magenta.}
    \label{Example3a}
\end{figure}

\section{Reference Tracking of Periodic Switching Affine Systems}
If solving the optimization problem \eqref{OptP1}–\eqref{OptP1c4} for model synthesis leads to an unsatisfactory approximation result, the identification can be repeated without the stability constraints \eqref{OptP1c1} and \eqref{OptP1c2}, aiming only at the best fit to the set $F$. This simplification would be motivated by the fact that the previously proposed problem is inherently non-convex, highly nonlinear, and lacks a guaranteed solution -- difficulties that are partly caused by encoding the existence of a common Lyapunov function. While important for global stability of the identified limit cycle, the conditions \eqref{OptP1c1} and \eqref{OptP1c2} significantly deteriorate the approximation result. It is therefore reasonable to omit the conditions during the phase of model identification, and focus on achieving high accuracy of approximating the data. While the resulting identified periodic behavior $\bar{x}^{*}_{r,\lbrack0,\infty\lbrack}$ is no longer guaranteed to be globally stable, the underlying switched affine structure is amenable for a-posteriori design of stabilizing controllers. 
Thus, the method presented in this section leverages the affine structure to synthesizing  control laws which enables asymptotic tracking of the identified periodic trajectory by considering it as a reference. Thus, 
the reference tracking problem from Def.~\ref{eq:pwadeftracking}  is addressed for the system \eqref{eq:pwadeftracking} and a reference of type \eqref{eq:reference}.

Consider first the following convergence property from \cite{demidovich1967lectures} and \cite{pavlov2007convergence}:

\begin{lemma}  \label{lemma:contractive2} (Convergence  of Contractive Systems)
If the system \eqref{eq:pwadeftracking}  is contractive, then the solution $x_c(t)$, $t\geq 0$ for a fixed control signal $u_c(t)$ and  starting from an arbitrary initial state $x_c(0)\in \mathbb{R}^{n_x}$ always converges to a unique solution $\bar{x}_c(t)$ for $t \to +\infty$.  
\hfill$\Box$ 
\end{lemma}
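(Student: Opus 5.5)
The plan is to establish incremental exponential stability of the contractive system by comparing two of its solutions with the common quadratic Lyapunov function. The limiting solution is then built as a limit of solutions started far in the past. Fix the input $u_c(t)$, write $f(x):=A_ix+b_i$ for $x\in P_i$, and consider two solutions $x_a(t),x_b(t)$ of \eqref{eq:pwadeftracking} driven by the same $u_c$. The input term $Bu_c$ cancels in the difference $e:=x_a-x_b$, so $\dot e=f(x_a)-f(x_b)$. Condition 1, encoded via \eqref{eq:continuousgradient}, makes $f$ continuous across every boundary $C_ix=d_i$, hence globally Lipschitz and piecewise affine. This gives existence and uniqueness of Carath\'eodory solutions, without sliding modes.

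The key step is the mean-value representation. The segment $s\mapsto x_b+s(x_a-x_b)$, $s\in[0,1]$, crosses the finitely many polyhedral regions $P_i$ in finitely many subintervals $[s_{k},s_{k+1}]$. Continuity of $f$ then yields
\begin{align*}
f(x_a)-f(x_b)=\Big(\sum_k (s_{k+1}-s_k)A_{i_k}\Big)e=:\bar A(t)\,e ,
\end{align*}
a convex combination of the $A_i$. Condition 2 is linear in $A$, so $\bar A^TQ+Q\bar A\prec0$. Taking $\lambda>0$ with $A_i^TQ+QA_i\preceq-\lambda Q$ for all $i$, the function $V(e)=e^TQe$ satisfies $\dot V\le-\lambda V$ almost everywhere. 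Therefore $\|x_a(t)-x_b(t)\|\le c\,e^{-\lambda t/2}\|x_a(0)-x_b(0)\|$ uniformly in the initial states. This step needs the most care. One must justify that the segment meets only finitely many regions; this holds because the $P_i$ are defined by finitely many half-spaces. One must also check that the regions sharing a boundary are exactly those whose dynamics coincide there, which is what Condition 1 provides for the cone partition of Section \ref{Partition}.

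Incremental stability already gives convergence of every solution to every other solution. To obtain the distinguished solution $\bar x_c$, I would follow Demidovich and Pavlov. First, bounded $u_c$ gives a bound on a single solution. For $V(x)=x^TQx$ one has $\dot V\le-\lambda V+c_1\|x\|(\|u_c\|+\max_i\|b_i\|)$, using again the convex-combination representation $f(x)-f(0)=\bar A x$. This bound is uniform in the starting time. Next, let $x^{(k)}$ be the solution starting at time $-k$ from $0$, with $u_c$ extended to negative times, for instance periodically or by $0$. The incremental estimate shows that $x^{(k)}(t)$ is Cauchy in $k$ for every $t$, uniformly on compact sets. Its limit $\bar x_c$ is bounded on $\mathbb{R}$ and solves the system. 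Every solution converges to $\bar x_c$ exponentially by the incremental estimate, which proves uniqueness of the limit. If $u_c$ is $T$-periodic, the uniqueness of the bounded solution forces $\bar x_c(t+T)=\bar x_c(t)$, which recovers Lemma \ref{lemma:contractive}.
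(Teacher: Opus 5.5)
Your proposal is correct. It follows essentially the route behind the statement: the paper gives no proof of its own and only cites Demidovich and Pavlov et al., and your argument is the standard convergent-systems (Demidovich) proof from those sources. You use continuity of the piecewise affine vector field to write $f(x_a)-f(x_b)=\bar A\,(x_a-x_b)$ with $\bar A$ a convex combination of the $A_i$, obtain incremental exponential stability from the common $Q$, and then build the unique bounded solution by the pull-back construction.

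Your caution about which regions actually share a boundary is well placed. The argument needs Condition~1 in its intended meaning, namely agreement of the affine maps of the two regions adjacent across each boundary. The printed indexing of Condition~1 and the sign of the $d_i$-term in \eqref{eq:continuousgradient} do not literally encode that.
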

Based on Lemma~\ref{lemma:contractive2}, the  control law:
\begin{align}  \label{eq:controllaw} 
u_c(t) := u_r(t) +K_ix_c(t)+w_i -(K_jx_r(t) +w_j)
\end{align} 
with $x_c(t) \in P_i$ and $x_r(t) \in P_j$, $i, j \in\{1, \ldots, n_p\}$ is used in \cite{van2008tracking} to track the reference trajectory $x_r(t)$. In here,  $K_i \in  \mathbb{R}^{n_u \times n_x}$ and $w_i \in  \mathbb{R}^{n_u \times 1}$ are parameters to be synthesized. Inserting \eqref{eq:controllaw} into  \eqref{eq:pwadeftracking} yields the controlled dynamics:
\begin{align}  \label{eq:controlledPWA} 
\dot{x}_c(t) \hspace{-1mm}=\hspace{-1mm}  (\hspace{-0.5mm}A_i\hspace{-1mm}+\hspace{-1mm}BK_i)x_c(t)\hspace{-1mm}+\hspace{-1mm}b_i\hspace{-1mm} +\hspace{-1mm}Bw_i \hspace{-1mm}+ \hspace{-1mm}B(\hspace{-0.5mm}u_r(t) \hspace{-0.8mm}-\hspace{-0.8mm} K_jx_r(t)\hspace{-0.8mm} -\hspace{-0.8mm}w_j\hspace{-0.5mm}),
\end{align} 
for which it can be noticed that the reference trajectory $x_r(t)$ governed by \eqref{eq:reference} is always a solution of  \eqref{eq:controlledPWA}. Thus, if the term $u_r(t) - (K_jx_r(t) +w_j)$ in   \eqref{eq:controlledPWA} is employed as the  control signal  $u_c(t)$ in  \eqref{eq:pwadeftracking} and if  \eqref{eq:controlledPWA} is contractive, it follows from Lemma~\ref{lemma:contractive2} that the solution from an arbitrary initial state $x_c(0)\in \mathbb{R}^{n_x}$  always converges to the reference $x_r(t)$, even if the latter is periodic. A straightforward approach to enforce this property
is to synthesize $K_i$ and $w_i$ by solving the tracking problem in Def.~\ref{eq:pwadeftracking}
while using the conditions 1 and 2 stated in Section~\ref{contractive} for the controlled dynamics:

1.) The derivative $\dot{x}_c(t)$ on any switching boundary of 
 \eqref{eq:controlledPWA} is  continuous, i.e.:
\begin{align}\label{eq:controlcontinuousgradient} 
(A_i\hspace{-0.5mm}  +\hspace{-0.5mm} BK_i)x_c\hspace{-0.5mm} +\hspace{-0.5mm} b_i \hspace{-0.5mm} +\hspace{-0.5mm}  Bw_i \hspace{-0.5mm}  \hspace{-0.5mm} =(A_{j}\hspace{-0.5mm} +\hspace{-0.5mm} BK_{j})x_c\hspace{-0.5mm} +\hspace{-0.5mm} b_{j} \hspace{-0.5mm} +\hspace{-0.5mm} Bw_{j}
\end{align}
 holds for all $x_c$ on any  boundary $C_{i}x_c  = d_{i}$, $i\in\{1, \ldots, n_p\}$ between adjacent polytopes $P_i$ and $P_{j}$. Note here that it is only required that the gradient of the controlled system  \eqref{eq:controlledPWA} is continuous, but not the one of the uncontrolled system in \eqref{eq:pwadeftracking}.
 
 2.) A positive-definite matrix $Q \in  \mathbb{R}^{{n_x} \times {n_x}} \succ 0$ exists such that:
\begin{align}  \label{eq:controlcommonlyapunov} 
(A_i+BK_i)^TQ+ Q(A_i+BK_i)\prec 0
\end{align}
holds for all  $i \in\{1, \ldots, n_p\}$. 

 It should be noted here that choosing the control parameters $K_i,K_j$ and $w_i,w_j$ n \eqref{eq:controlcontinuousgradient} and \eqref{eq:controlcommonlyapunov} to zero i would lead to the case discussed in Section \ref{contractive} for constructing the contractive dynamics. By Lemma~\ref{lemma:contractive} and if $u_r(t)$ is a piecewise continuous periodic signal, the dynamics \eqref{eq:controlledPWA} would thus itself obtain a globally stable limit cycle. Consequently, a system with a globally stable limit cycle obtained from the procedure in Section \ref{synthesis} is not suitable as reference system in the context of the addressed reference tracking problem. Evidently, no controller is required to track the periodic behavior when the reference itself already possesses a globally stable limit cycle. This section accordingly focuses on periodic references as switching affine systems, for which no stability statements can be made, with the goal of being able to track the reference asymptotically.

 The approach \cite{van2008tracking} in literature is recognized as conservative, since condition \eqref{eq:controlcommonlyapunov} requires a common matrix $Q$ across all polytopes. This is particularly true if the matrices $A_i$ differ substantially between the polytopes, or if the pairs $(A_i, B)$ are not all controllable. Thus, employing a control law of the form \eqref{eq:controlledPWA} can be severely restrictive, which motivates the development of a less demanding condition to ensure convergence.



\subsection{Synthesis with Guaranteed Convergence}

The synthesis scheme proposed below builds upon the approaches in \cite{johansson1997computation,miljkovic2025}, in which  piecewise quadratic Lyapunov functions were employed to prove (1) stability of PWA systems for a constant setpoint and reference tracking, as well as (2) a condition for state observation of a class of switching affine systems with specific space partitioning (parallel hyperplanes).
The following procedure extends these stability conditions to the tracking of periodic switching affine reference signals, based on the partitioning introduced in Section~\ref{Partition}. The tracking is achieved by combining the notion of contraction with the use of different Lyapunov functions assigned to the polytopes, thereby establishing convergence even when a common matrix $Q$ in \eqref{eq:controlcommonlyapunov} cannot be found.

Parameterized as described in Section~\ref{Partition}, the state space $\mathbb{R}^{n_x}$ is partitioned into a finite number of polytopes $P_i \subseteq \mathbb{R}^{n_x}$, with $i \in\{1, \ldots, n_P\}$; remember that this definition is particular in the sense that the number of polytopes $P_i$ and the number of bounding planes defined by $(C_i, d_i)$ are both equal to $n_p$. In addition the following is assumed for the periodic reference $x_r(t)$ and the state space partition:

\begin{assumption}\label{travelingtime} 
The  feed-forward control signal  $u_r(t)$ for the  reference trajectory $x_r(t)$ is selected such that $x_r(t)$ is contained in any polytope $P_i$, $i \in\{1, \ldots, n_P\}$ it passes through for a minimum dwell time of at least $T_{min} >0$.
\end{assumption}

\begin{assumption}\label{celldifference} 
When tracking the  periodic reference trajectory with $x_r(t) \in P_i$ at time $t$,  the state of \eqref{eq:controlledPWA} satisfies $x_c(t) \in \{P_{i-1} \cup P_{i} \cup  P_{i+1}\}$ for all $i \in\{1, \ldots, n_P\}$ and for all $t\in\mathbb{R}^{\geq 0}$.
\end{assumption}

\begin{assumption}\label{angle} 
    The bounding faces of any pair of two adjacent polytopes are defined such that the union of the two polytopes again forms a convex polytope.
\end{assumption}

The first assumption is easily met when the periodic reference $x_r(t)$ is obtained from the optimization procedure described earlier for identifying a switching affine system from a set of state samples. This is true since through the partitioning approach in Section~\ref{Partition} and the identification via the optimization problem in Section~\ref{optimization} (while omitting \eqref{OptP1c1} and \eqref{OptP1c2}), the limit cycle passes through any corresponding partition $P_i$ if the data approximation is satisfactory. Note that adjustments of $u_r(t)$ or the state space partitioning allow to enforce this effect.
Assumption 2 requires that the polytopes containing $x_c(t)$ and $x_r(t)$ are adjacent, and a relaxation of this assumption is made in Remark \ref{RemarkCommon}. Assumption 3 holds true if, e.g., the angle between the bounding faces of two adjacent polytopes $P_i,P_j$ is less than $\pi$ for all adjacent polytopes. The angle of $\pi/2$ between the two boundaries that define a single polytope would also satisfy assumption \ref{angle}, however, this is more restrictive with respect to the partitioning. In both cases, it then still holds that for any two states $x_c \in P_i$ and $x_r \in P_j$ ($P_i$ and $P_j$ sharing a common boundary), the line segment connecting $x_c$ and $x_r$ always intersects only with the boundary between $P_i$ and $P_j$. As illustrated in Fig. \ref{ExampleIntersection} for $n_x=2$ and $n_x=3$, the union of two neighboring polytopes thus always forms a convex polytope, and the number of switching lines for the reference tracking problem has to satisfy $n_P>4$. Note that an angular definition becomes nontrivial for $n_x \geq 4$. However, if the partitioning ensures the existence of a line segment analogous to the one in Fig. \ref{ExampleIntersection}, the upcoming theorem remains valid. The number of switching boundaries does not pose a practical challenge. In the special case that periodic behavior is generated by a bimodal switching affine system as in \cite{HS23,HLS24}, additional switching boundaries can simply be inserted such that the bounding faces between adjacent polytopes are smaller than $\pi$. To the resulting additional polytopes the same dynamics as in the original polytope is assigned. This leaves the overall behavior unchanged while artificially refining the partition to satisfy the requirement in Assumption \ref{angle}. 

Based on these assumptions and inspired by the previous results in \cite{miljkovic2025}, stability conditions for the reference tracking problem are stated next.
\begin{figure}[!ht]
    \centering
    \begin{minipage}[b]{0.475\textwidth}
        \centering
        \psfrag{x}[][]{$x_c(t)$}
    \psfrag{xt}[][r]{$\breve x(t)$}
    \psfrag{xr}[][r]{$x_r(t)$}
    \psfrag{q1}[][]{$P_{i-1}$}
    \psfrag{q}[][]{$P_i$}
    \psfrag{q2}[][]{$P_{i+1}$}
    \psfrag{C}[][]{$C_{i},d_i$}
        \includegraphics[width=0.75\textwidth, height=0.25\textheight]{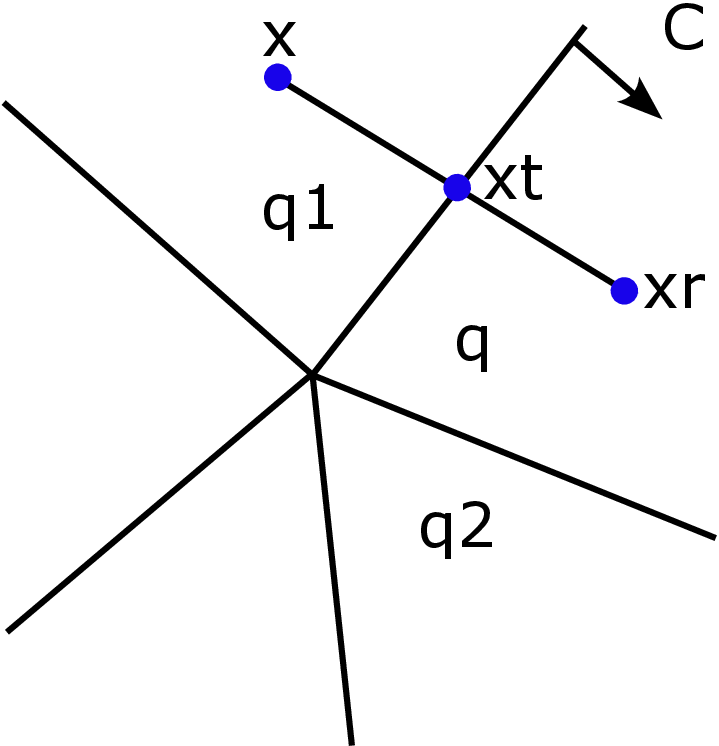}
        \subcaption{$n_x=2$}
        \label{ExampleIntersectiona}
    \end{minipage}
    \hfill
    \begin{minipage}[b]{0.475\textwidth}
        \centering
         \psfrag{x}[][r]{$x_c(t)$}
    \psfrag{xt}[][]{$\breve x(t)$}
    \psfrag{xr}[][r]{$x_r(t)$}
    \psfrag{q1}[][]{$P_{i-1}$}
    \psfrag{q}[][]{$P_i$}
    \psfrag{q2}[][]{$P_{i+1}$}
    \psfrag{c}[][r]{$C_{i},d_i$}
        \includegraphics[width=0.75\textwidth, height=0.25\textheight]{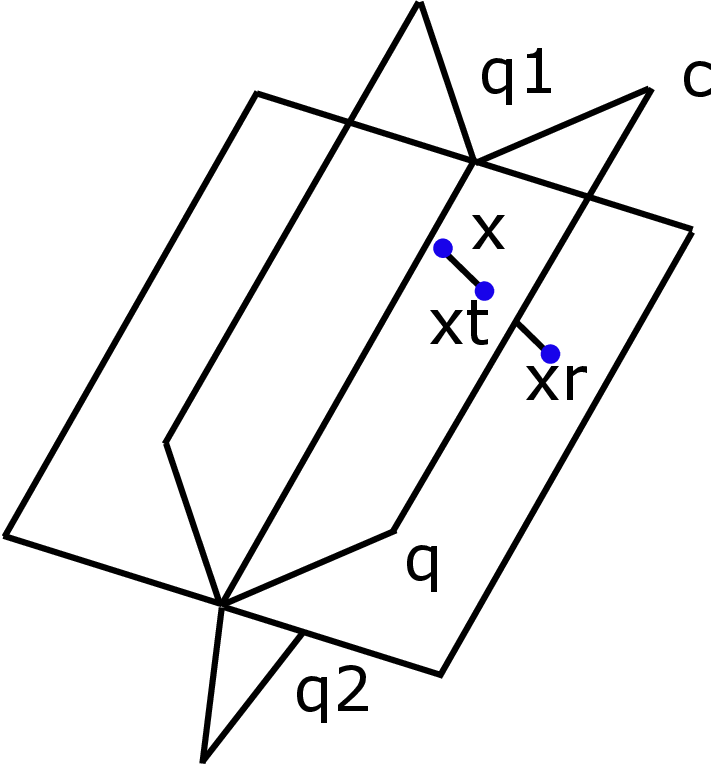}
        \subcaption{$n_x=3$}
        \label{ExampleIntersectionb}
    \end{minipage}
    \caption{The state $\breve x(t)$ denotes the intersection point on the boundary $C_{i}x=d_{i}$ of the line segment connecting $x_r(t)$ and $x_c(t)$.}
    \label{ExampleIntersection}
\end{figure}
\begin{theorem}\label{maintheorem} 
Given a system of type \eqref{eq:pwadeftracking}, a control law  \eqref{eq:controllaw} and a periodic reference trajectory $x_r(t)$ governed by \eqref{eq:reference}. Let the assumptions \ref{travelingtime}, \ref{celldifference} and \ref{angle}, and the continuity condition \eqref{eq:controlcontinuousgradient} on the switching boundaries hold. If then positive scalars $\rho>0\in \mathbb{R}$ and $\sigma>0\in \mathbb{R}$ as well as a set of positive-definite matrices $Q_i \in  \mathbb{R}^{{n_x} \times {n_x}} \succ 0$,  $i \in\{1, \ldots, n_P\}$ exist such that:
\begin{align} 
& (A_{i-1}+BK_{i-1})^TQ_i+ Q_i(A_{i-1}+BK_{i-1})\prec -\sigma Q_i \label{eq:condition1},  \\
& (A_{i}+BK_{i})^TQ_i+ Q_i(A_{i}+BK_{i})\prec-\sigma Q_i  \label{eq:condition2}, \\
& (A_{i+1}+BK_{i+1})^TQ_i+ Q_i(A_{i+1}+BK_{i+1})\prec -\sigma Q_i\label{eq:condition3}, \\
& Q_i \preceq \rho Q_{i+1}, ~ Q_{i+1} \preceq \rho Q_{i}, ~Q_i \preceq \rho Q_{i-1}, ~ Q_{i-1} \preceq \rho Q_{i}  \label{eq:condition4}, \\
&  \rho e^{-\sigma T_{min}} < 1  \label{eq:condition5} 
\end{align}
hold, then  $\lim_{t \to \infty}||x_c(t) - x_r(t)|| =0$ applies for any initial state $x_c(0)$.  \hfill$\Box$ 
\end{theorem}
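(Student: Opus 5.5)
The proof works with the error $e(t):=x_c(t)-x_r(t)$ and a mode-dependent quadratic Lyapunov function. The function is switched according to the polytope that contains the reference, not the state. Concretely, set $V(t):=e(t)^TQ_{\sigma_r(t)}e(t)$, where $\sigma_r(t)=j$ whenever $x_r(t)\in P_j$. The argument has two parts. First, between switches of the reference, $V$ decays at rate at least $\sigma$. Second, at each switch of the reference, $V$ grows by at most the factor $\rho$. The dwell time of Assumption~\ref{travelingtime} and condition \eqref{eq:condition5} then give a net contraction over each dwell interval.

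\textbf{Step 1 (error dynamics as a convex combination).} Insert \eqref{eq:controllaw} into \eqref{eq:pwadeftracking} and subtract \eqref{eq:reference}. The feed-forward terms cancel, giving $\dot e=f(x_c)-f(x_r)$ with $f(x):=(A_k+BK_k)x+b_k+Bw_k$ for $x\in P_k$. The continuity condition \eqref{eq:controlcontinuousgradient} makes $f$ continuous and piecewise affine. By Assumption~\ref{celldifference}, $x_c\in P_{j-1}\cup P_j\cup P_{j+1}$. By Assumption~\ref{angle}, the segment from $x_r$ to $x_c$ stays in the convex union of two adjacent polytopes and crosses at most one boundary, at the point $\breve x(t)$ of Fig.~\ref{ExampleIntersection}. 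Splitting $f(x_c)-f(x_r)=\bigl(f(x_c)-f(\breve x)\bigr)+\bigl(f(\breve x)-f(x_r)\bigr)$ and using continuity at $\breve x$, each piece is affine on its subsegment. Since $\breve x-x_r=\lambda e$ and $x_c-\breve x=(1-\lambda)e$ for some $\lambda\in[0,1]$, this yields $\dot e=\bigl(\lambda \bar A_{j}+(1-\lambda)\bar A_{k}\bigr)e$, where $\bar A_k:=A_k+BK_k$ and $k\in\{j-1,j,j+1\}$.

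\textbf{Step 2 (decay between reference switches).} On an interval where $\sigma_r\equiv j$, we get $\dot V=e^T\bigl(\lambda(\bar A_j^TQ_j+Q_j\bar A_j)+(1-\lambda)(\bar A_k^TQ_j+Q_j\bar A_k)\bigr)e$. Conditions \eqref{eq:condition1}--\eqref{eq:condition3} make this at most $-\sigma V$. Hence $V(t)\le e^{-\sigma(t-t_k)}V(t_k^+)$ on each interval $[t_k,t_{k+1})$.

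\textbf{Step 3 (jumps at reference switches).} The reference moves only to an adjacent polytope, and $e$ is continuous. Condition \eqref{eq:condition4} therefore gives $V(t_{k+1}^+)\le\rho V(t_{k+1}^-)$. Because every dwell interval satisfies $t_{k+1}-t_k\ge T_{min}$, we obtain $V(t_{k+1}^+)\le \rho e^{-\sigma T_{min}}V(t_k^+)$. By \eqref{eq:condition5} this factor $\gamma<1$. Iterating gives $V(t_k^+)\le\gamma^kV(0)$. Within an interval $V$ only decreases, and the uniform bounds $\lambda_{\min}(Q_j)\|e\|^2\le V\le\lambda_{\max}(Q_j)\|e\|^2$ hold over the finitely many $Q_j$. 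Together these give $\|e(t)\|\to0$. If the reference stops switching after some time, $V$ simply decays exponentially from then on. Since $x_r$ is periodic it switches infinitely often, but both cases are covered.

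\textbf{Main obstacle.} I expect Step 1 to be the hardest. One has to justify rigorously that the crossing point $\breve x$ is unique and that the affine splitting holds. This is where Assumptions~\ref{celldifference} and \ref{angle} are needed: without convexity of the union, the segment could pass through further polytopes whose matrices are not covered by the three LMIs. A secondary point is well-posedness. The closed-loop vector field is continuous and piecewise affine, hence globally Lipschitz, so solutions exist, are unique, and $e$ is absolutely continuous. The derivative computation in Step 2 then holds almost everywhere, which suffices for the comparison argument.
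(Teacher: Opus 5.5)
Your proposal is correct and follows essentially the same route as the paper. It uses the same Lyapunov function $V=\|x_c-x_r\|_{Q_i}^2$, indexed by the polytope containing $x_r$. It splits the error at the crossing point $\breve x$ in the same way, which you express more cleanly as $\dot e=(\lambda\bar A_j+(1-\lambda)\bar A_k)e$ instead of through the normalized directions of \eqref{Intersectionpoint}. It also relies on the same ``decay by $e^{-\sigma T_{min}}$, jump by at most $\rho$'' bound from \eqref{eq:condition4}--\eqref{eq:condition5}.

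Your explicit geometric bound $V(t_k^+)\le\gamma^k V(0)$ makes the final convergence step more rigorous than the paper's informal closing argument. The bound holds up to a constant factor if the initial dwell interval is shorter than $T_{min}$. Combined with the uniform eigenvalue bounds on the $Q_j$, it gives $\|e(t)\|\to 0$ directly.
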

\begin{proof}
First, a piecewise quadratic Lyapunov function $V_i(x_c(t), x_r(t)): \mathbb{R}^{2 n_x} \to \mathbb{R}$, with $i \in\{1, \ldots, n_P\}$, is introduced and defined as
\begin{align} \label{eq:lyapunov}
V_i(x_c(t), x_r(t)) &= (x_c(t) - x_r(t))^\top Q_i (x_c(t) - x_r(t)) \notag \\
&= \| x_c(t) - x_r(t) \|_{Q_i}^2, \quad \text{with}\ x_r(t) \in P_i,
\end{align}
which quantifies the distance between the state $x_c(t)$ and the reference trajectory $x_r(t)$ at time $t$. It is emphasized that the assignment of the matrix $Q_i$ in \eqref{eq:lyapunov} is determined solely by the polytope $P_i$ containing the reference $x_r(t)$, and not by the polytope containing the controlled state $x_c(t)$. Next the three different cases for adjacent polytopes containing $x_c(t)$ and $x_r(t)$ are analyzed with respect to the time derivatives of $V_i(x_c(t), x_r(t))$.

\underline{Case 1:} $x_r(t)\in P_i$ and $x_c(t)\in P_i$
\begin{align*} 
&\dot{V}_i(x_c(t), x_r(t)) = 2 (x_c(t) -x_r(t))^TQ_i (\dot{x}_c(t) -\dot{x}_r(t)) \\
&=2 (x_c(t) -x_r(t))^TQ_i (A_{i}+BK_{i}) (x_c(t) -x_r(t))   \\
&=  (x_c(t) - x_r(t))^T   (A_{i} + BK_{i})^T Q_i +  Q_i(A_{i} + BK_{i})(x_c(t)  - x_r(t)) \\ 
&< -\sigma  V_i(x_c(t), x_r(t))  < 0,
\end{align*}
where the last  two inequalities hold due to \eqref{eq:condition2}.

\underline{Case 2:} $x_r(t) \in P_i$, and $x_c(t)\in P_{i-1}$

In this setting, according to the partitioning of Section \ref{Partition} and Assumption \ref{angle}, a line segment connecting $x_c(t)$ and $x_r(t)$ necessarily intersects the boundary $C_{i}x = d_{i}$ between the polytopes $P_{i-1}$ and $P_i$ (see the illustration in Fig.~\ref{ExampleIntersection}). Denoting the intersection point by $\breve {x}(t)$, and since the three points $x_c(t)$, $\breve x(t)$ and $x_r(t)$ are located on the same line, the following equation hold:
\begin{align} \label{Intersectionpoint}
\frac{x_c(t)  -  \breve{x}(t)}{||x_c(t)  - \breve{x}(t)||_{Q_i}} = \frac{ \breve{x}(t)  -  x_r(t)  }{||\breve {x}(t)  -  x_r(t)||_{Q_i}}  =  \frac{x_c(t)  -  x_r(t)}{|| x_c(t)  -  x_r(t)||_{Q_i}}.
\end{align}
Based on the fact of continuity of the state derivatives on every switching boundary together with \eqref{Intersectionpoint}, the derivative of the piecewise quadratic Lyapunov function satisfies:
\begin{align} \label{eq:lyapunovderivative2} 
&\dot{V}_i(x_c(t), x_r(t)) = 2 (x_c(t) -x_r(t))^TQ_i (\dot{x}_c(t) -\dot{x}_r(t)) \notag \\
&=  2 (x_c(t) -x_r(t))^TQ_i (\dot{x}_c(t) - \dot{\breve{x}}(t) +  \dot{\breve{x}}(t) - \dot{x}_r(t)) \notag \\
&= 2 || x_c(t) -x_r(t)||_{Q_i} \left( \frac{ (x_c(t) - \breve{x}(t))^T Q_i (\dot{x}_c(t) - \dot{\breve{x}}(t))  }{||x_c(t) -\breve{x}(t)||_{Q_i}}\right. \notag \\
& + \left. \frac{ (\breve{x}(t)-x_r(t)) ^T Q_i ( \dot{\breve{x}}(t) - \dot{x}_r(t))  }{||\breve{x}(t) -x_r(t)||_{Q_i}}  \right).
\end{align}
From \eqref{eq:condition1} it is known that: 
 \begin{align} \label{eq:lyapunovderivative3} 
&(x_c(t) \hspace{-0.5mm}-\hspace{-0.5mm} \breve{x}(t))^T Q_i (\dot{x}_c(t) \hspace{-0.5mm}-\hspace{-0.5mm} \dot{\breve{x}}(t))= (x_c(t) \hspace{-0.5mm}-\hspace{-0.5mm} \breve{x}(t))^T Q_i (A_{i-1}\hspace{-0.5mm}\hspace{-0.5mm}+\hspace{-0.5mm}\hspace{-0.5mm}BK_{i-1}) (x_c(t)\hspace{-0.5mm} - \hspace{-0.5mm}\breve{x}(t))  \notag \\
  & < -\frac{\sigma}{2} ||x_c(t) -\breve{x}(t)||^2_{Q_i}
\end{align}
and from \eqref{eq:condition2} that: 
 \begin{align} \label{eq:lyapunovderivative4} 
&  (\breve{x}(t)-x_r(t)) ^T Q_i (\dot{\breve{x}}(t) - \dot{x}_r(t))\hspace{-0.5mm}= \hspace{-0.5mm}(\breve{x}(t)-x_r(t))^TQ_i (A_{i}+BK_{i}) (\breve{x}(t)-x_r(t))  \notag \\
  & <  -\frac{\sigma }{2} ||\breve{x}(t)-x_r(t)||^2_{Q_i}
\end{align}
hold. Thus, the time derivative of the Lyapunov function in \eqref{eq:lyapunovderivative2} satisfies:
\begin{align} \label{eq:lyapunovderivative5} 
&\dot{V}_i(x_c(t), x_r(t))  <  -\sigma  || x_c(t) -x_r(t)||_{Q_i}  (||x_c(t) -\breve{x}(t)||_{Q_i} + ||\breve{x}(t)-x_r(t)||_{Q_i}) \notag \\
&<-\sigma  || x_c(t) -x_r(t)||^2_{Q_i} = -\sigma  V_i(x_c(t), x_r(t)) <0,
\end{align}
in which the second inequality holds since:
 \begin{align} \label{eq:lyapunovderivative6} 
||x_c(t) - x_r(t)||_{Q_i}  =  ||x_c(t)  -  \breve{x}(t)||_{Q_i}   +  ||\breve{x}(t)-x_r(t)||_{Q_i}
 \end{align}
 applies for any $x_c(t)$, $\breve{x}(t)$, and $x_r(t)$  on the same line.
 
 For the case with $x_r(t)\in P_i$ and $x_c(t)\in P_{i+1}$, the inequality \eqref{eq:lyapunovderivative5} can be shown following the same way as in case 2 but by use of \eqref{eq:condition3}; for brevity this derivation is omitted. 

Consequently, if $x_r(t)\in P_i$ applies and  Assumption 2 holds, the value of $V_i(x_c(t), x_r(t))$ is strictly decreasing over time and satisfies:
 \begin{align} \label{eq:rate1} 
 V_i(x_c(t), x_r(t)) < e^{-\sigma (t - t_{i, s})} V_i(x_c(t_{i, s}), x_r(t_{i, s})), 
\end{align}
where $t_{i, s}$ denotes the time that $x_r(t)$ first enters into the polytope $P_i$. If the reference $x_r(t)$ remains within the polytope $P_i$ for all $t \ge t_{i, s}$, the strict decrease of $V_i(x_c(t), x_r(t))$ implies that
$ \lim_{t \to +\infty} V_i(x_c(t), x_r(t)) = 0$, which hence ensures the convergence of the state $x_c(t)$ to the reference trajectory $x_r(t)$.

In the alternative scenario, where the reference $x_r(t)$ transitions from $P_i$ into the subsequent polytope $P_{i+1}$ at time $t_{i+1, s}$, it is known that:
 \begin{align} \label{eq:rate2} 
& V_{i+1}(x_c(t_{i+1, s}), x_r(t_{i+1, s}))  \le \rho V_{i}(x(t_{i+1, s}), x_r(t_{i+1, s})) \notag \\
 & <  \rho  e^{-\sigma (t_{i+1, s} - t_{i, s})} V_i(x_c(t_{i, s}), x_r(t_{i, s}))  \notag \\
  &  <   \hspace{-0.5mm} \rho   e^{-\sigma T_{min}} V_i(x_c(t_{i, s}), x_r(t_{i, s}))  \hspace{-0.5mm}  <   \hspace{-0.5mm}  V_i(x_c(t_{i, s}), x_r(t_{i, s}))
\end{align} holds, under exploitation of the four inequalities from \eqref{eq:condition4}, \eqref{eq:rate1}, Assumption 1, and \eqref{eq:condition5} respectively. For the scenario in which the reference $x_r(t)$ transitions from polytope $P_i$ into the preceding polytope $P_{i-1}$, a set of inequalities analogous to those in \eqref{eq:rate2} can be derived in a similar manner. As established by the inequalities in \eqref{eq:rate2}, a switch in the polytope containing the reference $x_r(t)$ may cause an increase of the value of the piecewise Lyapunov function at the switching instant. However, it is guaranteed that the new value remains strictly smaller than the value at the previous switching instant \eqref{eq:rate2}. Consequently, the state $x_c(t)$ must converge to the reference trajectory $x_r(t)$ as $t \to \infty$. \hfill$\Box$
\end{proof}

\begin{remark}\label{RemarkCommon}
Assumption~\ref{celldifference}, which imposes a bound on the maximal difference of the polytope index  between the locations of $x_c(t)$ and $x_r(t)$ can be relaxed from one to larger values $z \in \mathbb{R}^{>0}$. In this generalized setting, only the conditions \eqref{eq:condition1} and \eqref{eq:condition3} need to be replaced by:
\begin{align} 
& (A_{i-j}+BK_{i-j})^\top Q_i + Q_i(A_{i-j}+BK_{i-j}) \prec -\sigma P_i \label{eq:condition1new} \\
& (A_{i+j}+BK_{i+j})^\top Q_i + Q_i(A_{i+j}+BK_{i+j}) \prec -\sigma P_i \label{eq:condition3new}
\end{align}
for all $j \in\{1, \ldots, z\}$. The limiting case is given by $z := n_P$, for which inequalities \eqref{eq:condition1new} and \eqref{eq:condition3new} imply the existence of a common Lyapunov function. \hfill$\Box$
\end{remark}
\subsection{Numerical Example for Reference Tracking}
Assume the state space $\mathbb{R}^2$ is partitioned into six polytopes, in accordance to Assumption \ref{angle} by the following switching hyperplanes:
\begin{align*}
&C_1=\begin{bmatrix}
    1&-\frac{1}{4}
\end{bmatrix},~C_2=\begin{bmatrix}
    1&-2
\end{bmatrix},~C_3=\begin{bmatrix}
    -1&-2
\end{bmatrix},\\
&C_4=\begin{bmatrix}
    -1&\frac{1}{5}
\end{bmatrix},~C_5=\begin{bmatrix}
    -1&3
\end{bmatrix},~C_6=\begin{bmatrix}
   1&2.5
\end{bmatrix},\\
&d_1\hspace{-1mm}=\hspace{-1mm}0.5,~d_2\hspace{-1mm}=\hspace{-1mm}-3,~d_3\hspace{-1mm}=\hspace{-1mm}-5,~d_4\hspace{-1mm}=\hspace{-1mm}-0.6,~d_5\hspace{-1mm}=\hspace{-1mm}5,~d_6\hspace{-1mm}=\hspace{-1mm}6,
\end{align*}. Let the reference switching affine system of type \eqref{eq:reference} be given by:
\begin{align*}
&A_1 \hspace{-0.5mm}=\hspace{-0.5mm} \begin{bmatrix} -1.9855 & -1.1502 \\ 1.9296 & -5.8722 \end{bmatrix}, 
A_2 \hspace{-0.5mm}=\hspace{-0.5mm} \begin{bmatrix} -4.6404 & 4.1596 \\ 1.5597 & -5.1324 \end{bmatrix},  
A_3 \hspace{-0.5mm}=\hspace{-0.5mm} \begin{bmatrix} -7.0100 & -0.5795 \\ 1.5295 & -5.1926 \end{bmatrix}, \\
&A_4 \hspace{-0.5mm}=\hspace{-0.5mm} \begin{bmatrix} -13.5623 & -0.5406 \\ 0.2063 & -5.5750 \end{bmatrix},  
A_5 \hspace{-0.5mm}= \hspace{-0.5mm}\begin{bmatrix} -10.5991 & -9.4302 \\ -0.1190 & -4.5992 \end{bmatrix}\hspace{-0.5mm},  
A_6 \hspace{-0.5mm}=\hspace{-0.5mm} \begin{bmatrix} -7.1370 & -0.7749 \\ -0.3981 & -5.2970 \end{bmatrix}, \\
&b_1 = \begin{bmatrix} -2.6396 \\ -8.7036 \end{bmatrix}, 
b_2 = \begin{bmatrix} -10.6043 \\ -9.8134 \end{bmatrix},  
b_3 = \begin{bmatrix} 1.2435 \\ -9.6627 \end{bmatrix}, \\
&b_4 = \begin{bmatrix} 7.7180 \\ -7.5747 \end{bmatrix}, 
b_5 = \begin{bmatrix} 22.5340 \\ -9.2010 \end{bmatrix},  
b_6 = \begin{bmatrix} 1.7613 \\ -7.5263 \end{bmatrix}, \\
&B = \begin{bmatrix} 25 & 0 \\ 0 & 250 \end{bmatrix}, u_r(t)=\begin{bmatrix} square(\frac{2\pi}{1.4}t) \\ square(\frac{2\pi}{1.4}t) \end{bmatrix},
\end{align*} in which $square$ means the puls wave changing between the values $1$ and $-1$ with the frequency $\frac{2\pi}{1.4}$. The reference trajectory $x_r(t)$ evolves into a limit cycle $\bar{x}^{*}_{r,\lbrack0,\infty\lbrack}$, which runs through all six polytopes, see Fig.\ref{Example4}. Note that the system does not satisfy Condition 1 of Sec.~\ref{contractive}, thus the contractivity property is lost, consequently there are no global stability guarantees for the limit cycle, motivating the use of a tracking controller. With the proposed tracking scheme and with an observed minimum dwell time $T_{min}=0.0098$ obtained from the simulation of $\bar{x}^{*}_{r,\lbrack0,\infty\lbrack}:=x_r(t)$ the continuity condition \eqref{eq:controlcontinuousgradient} and the conditions \eqref{eq:condition1}-\eqref{eq:condition5} are satisfied by:
\begin{figure}[!t]
    \centering   \includegraphics[width=0.75\textwidth, height=0.25\textheight]{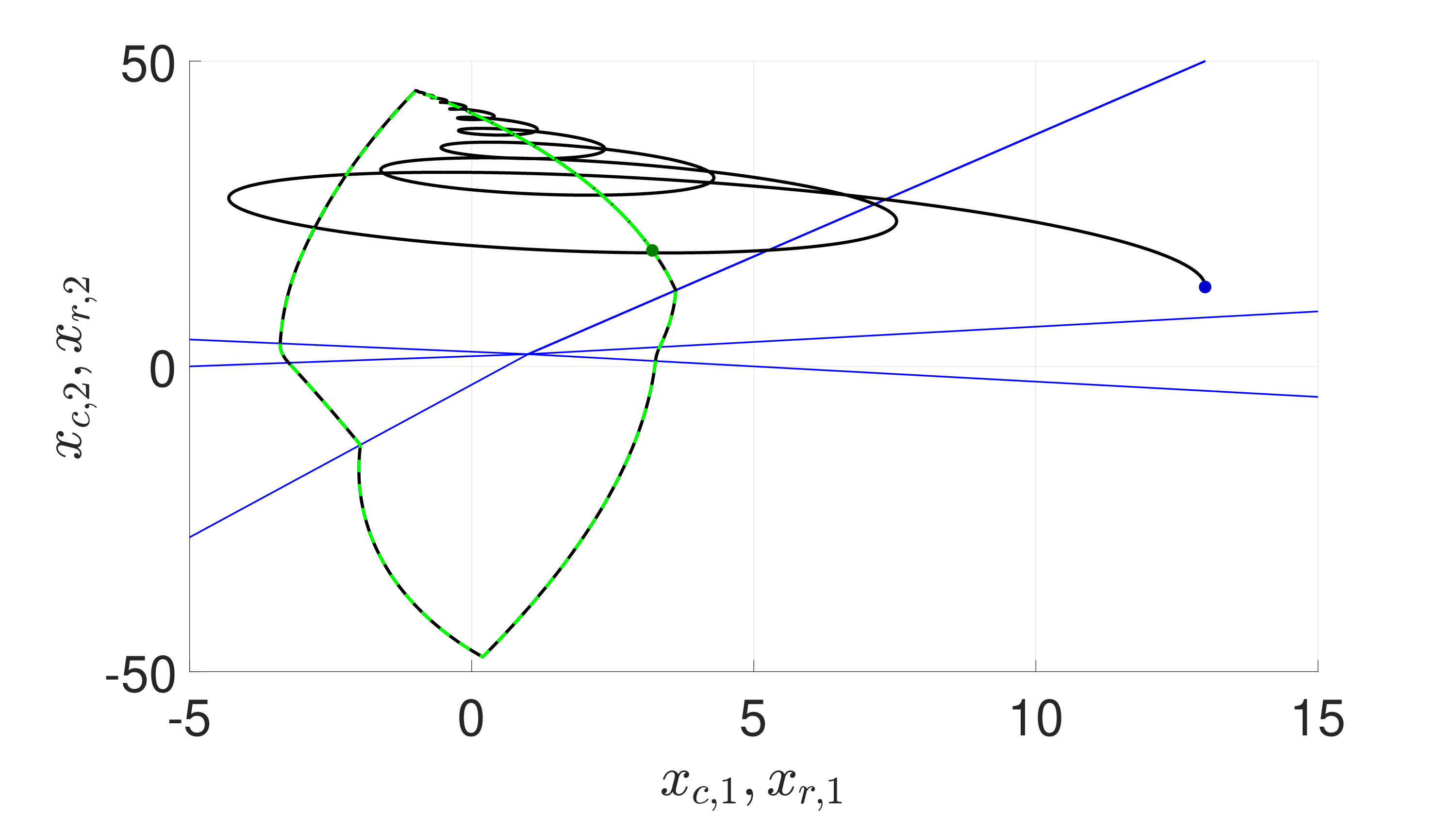}
    \caption{Evolution of the reference limit cycle (dashed green) and the state trajectory of the controlled system (black) for an initialization $x_r(0)=\begin{bmatrix} 3.2 && 19 \end{bmatrix}^T\in P_6$ and $x_c(0)=\begin{bmatrix} 13 && 13 \end{bmatrix}^T\in P_1$ in the partitioned state space (blue).}
    \label{Example4}
\end{figure}
\begin{align*}
&K_1 = \begin{bmatrix} -1.8541 & -3.2527 \\ 0.4939 & 0.1413 \end{bmatrix},  
K_2 = \begin{bmatrix} -1.7881 & -3.3849 \\ 0.4920 & 0.1451 \end{bmatrix},  
K_3 = \begin{bmatrix} -1.6949 & -3.1986 \\ 0.4903 & 0.1418 \end{bmatrix}, \\
&K_4 = \begin{bmatrix} -1.4879 & -3.1892 \\ 0.4862 & 0.1452 \end{bmatrix},  
K_5 = \begin{bmatrix} -1.5728 & -2.9344 \\ 0.4905 & 0.1323 \end{bmatrix}, 
K_6 = \begin{bmatrix} -1.7009 & -3.2545 \\ 0.4940 & 0.1412 \end{bmatrix}, \\
&w_1 = \begin{bmatrix} 0.2717 \\ -0.0029 \end{bmatrix},  
w_2 = \begin{bmatrix} 0.4699 \\ -0.0087 \end{bmatrix},  
w_3 = \begin{bmatrix} 0.0043 \\ -0.0005 \end{bmatrix}, \\
&w_4 = \begin{bmatrix} -0.2217 \\ -0.0032 \end{bmatrix},  
w_5 = \begin{bmatrix} -0.6463 \\ 0.0183 \end{bmatrix}, 
w_6 = \begin{bmatrix} 0.1221 \\ -0.0030 \end{bmatrix}, \\
&Q_1 = \begin{pmatrix} 17.5359 & 5.4649 \\ 5.4649 & 11.1105 \end{pmatrix}, 
Q_2 = \begin{pmatrix} 17.9853 & 5.5977 \\ 5.5977 & 11.1232 \end{pmatrix}, 
Q_3 = \begin{pmatrix} 17.5941 & 5.5231 \\ 5.5231 & 11.0175 \end{pmatrix}, \\
&Q_4 = \begin{pmatrix} 17.0113 & 5.3424 \\ 5.3424 & 10.9248 \end{pmatrix},  
Q_5 = \begin{pmatrix} 16.7391 & 5.2837 \\ 5.2837 & 10.9282 \end{pmatrix},  
Q_6 = \begin{pmatrix} 16.8744 & 5.3094 \\ 5.3094 & 11.0130 \end{pmatrix},
\end{align*} with $\rho=1.1154$, $\sigma=12.8307$ for the controlled system. As a result, asymptotic tracking behavior is guaranteed, as shown in Fig.~\ref{Example4} exemplarily for the initialization $x_r(0)=\begin{bmatrix} 3.2 && 19 \end{bmatrix}^T\in P_6$ and $x_c(0)=\begin{bmatrix} 13 && 13 \end{bmatrix}^T\in P_1$. The decreasing distance of the tracking error according to Def.~\ref{eq:pwadeftracking} is shown in Fig.~\ref{Example4a}. The evolution of $x_c(t)$ over time, illustrated in Fig.~\ref{Example4b} and Fig.~\ref{Example4c} for both components $x_{c,1}(t)$ and $x_{c,2}(t)$, shows that the determined controller tracks the periodic reference despite large deviations of the initialization, although the distance between $x_c(t)$ and $x_r(t)$ increases temporarily in the transient phase.

\begin{figure}[!t]
    \centering
    \includegraphics[width=0.75\textwidth, height=0.25\textheight]{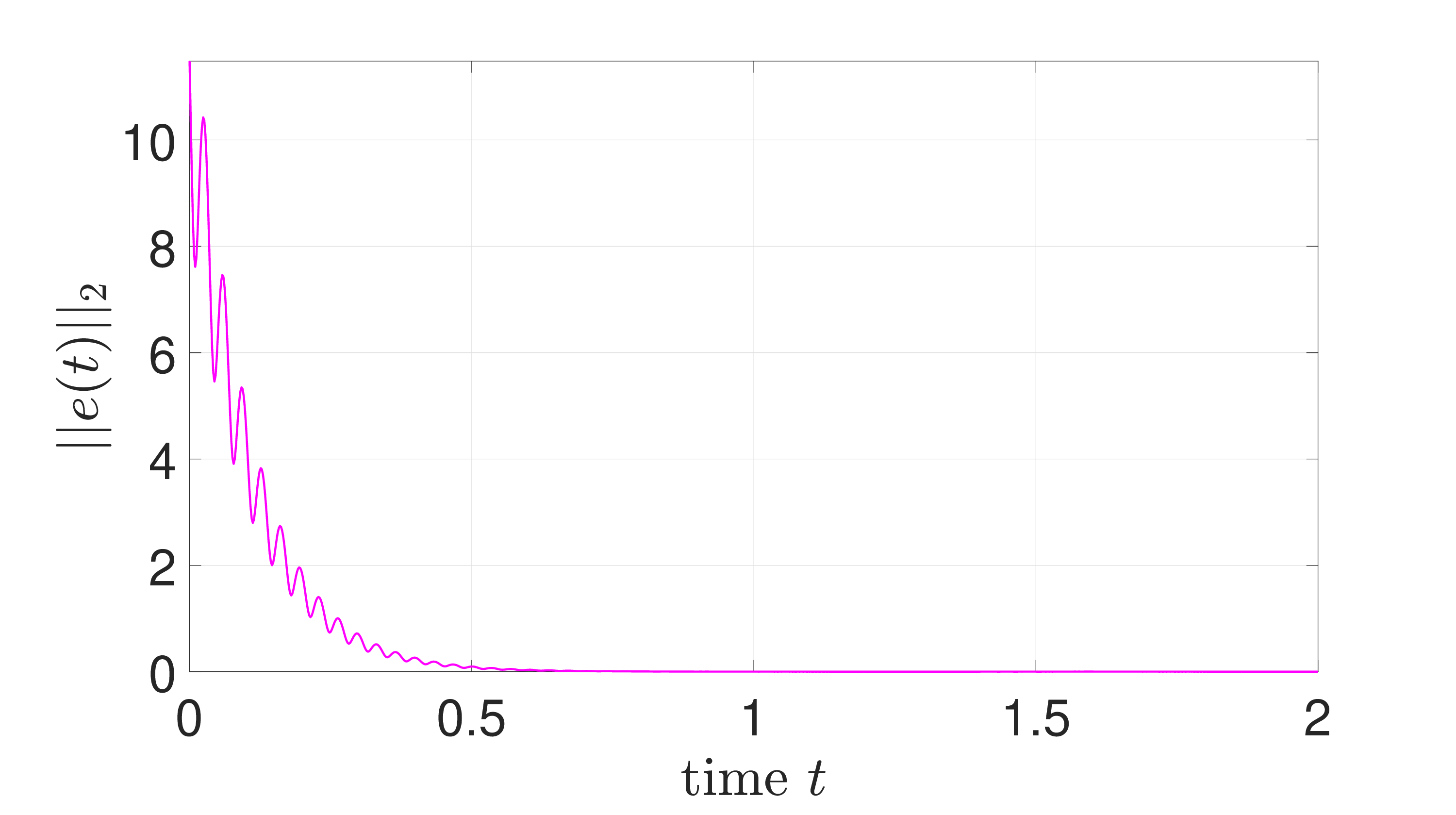}
    \caption{Evolution of the tracking error of the reference and the controlled system over time (magenta).}
    \label{Example4a}
\end{figure}
\begin{figure}[!t]
    \centering
    \includegraphics[width=0.75\textwidth, height=0.25\textheight]{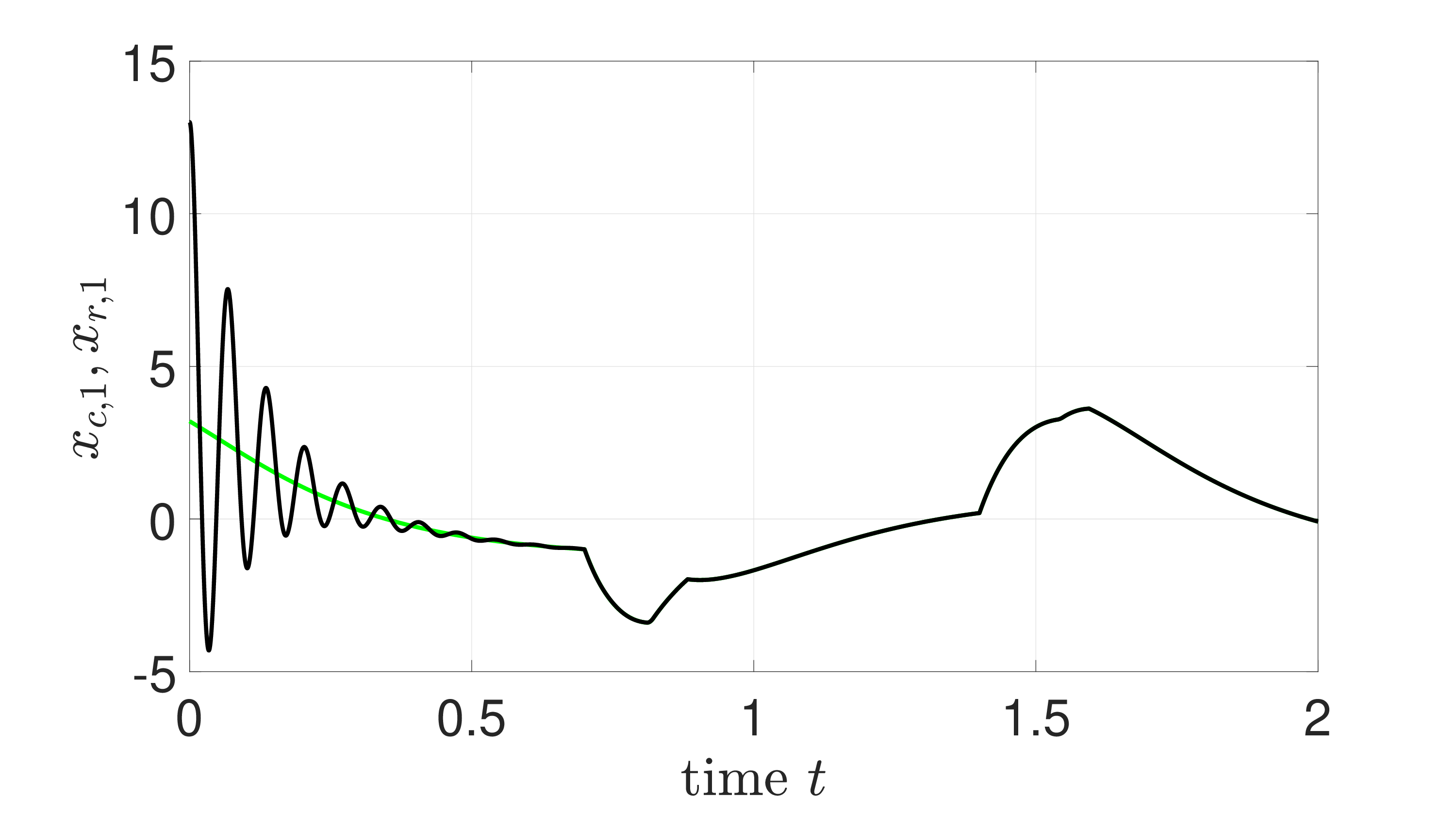}
    \caption{Evolution of $x_{r,1}(t)$ (green) and $x_{c,1}(t)$ (black).}
    \label{Example4b}
\end{figure}
\begin{figure}[!t]
    \centering
    \includegraphics[width=0.75\textwidth, height=0.25\textheight]{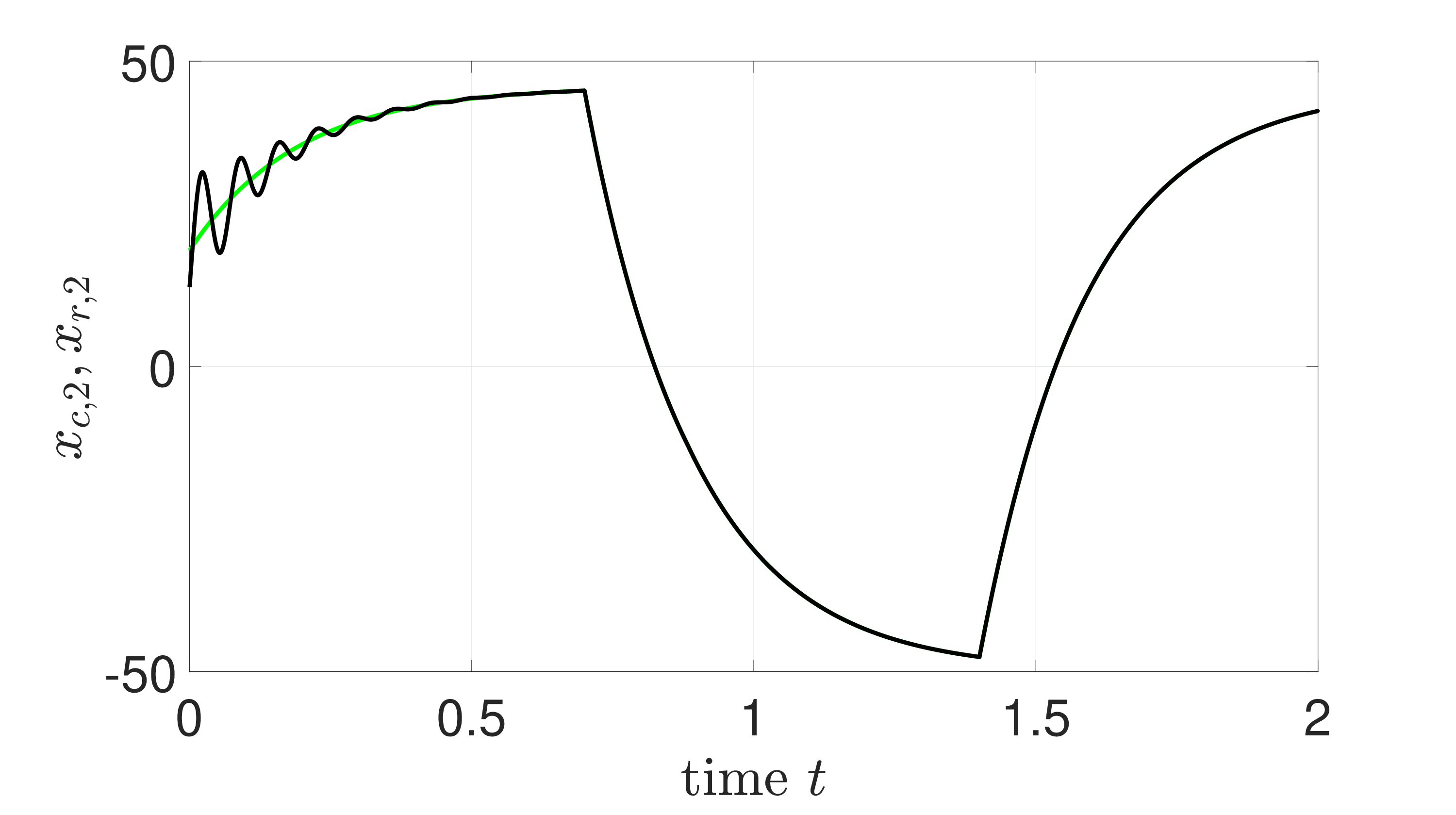}
    \caption{Evolution of $x_{r,2}(t)$ (green) and $x_{c,2}(t)$ (black).}
    \label{Example4c}
\end{figure}

\section{Conclusion}
This paper has introduced a method for approximating periodic behavior in nonlinear dynamical systems, extending beyond the planar case. By sampling the limit cycle of the nonlinear dynamics, switched affine systems with exogenous inputs are employed for approximation, preserving essential properties such as stability and uniqueness. In contrast to previously available methods, this approach provides constructive rules for partitioning the state space and synthesizing the dynamics through optimization, ensuring that the limit cycle of the switching affine system matches the sample points of the nonlinear data generator in an optimized sense. Through the notion of contractivity, global stability of the constructed limit cycle is guaranteed. Although the resulting conditions lead to a non-convex, nonlinear optimization problem, they simultaneously enforce smoothness of the limit cycle -- a characteristic typically observed in real-world oscillating systems. While obtaining the global solution to the optimization problem cannot be assured a-priori, any feasible solution yields a convergent trajectory. The use of a common intersection point for all boundaries proves particularly advantageous if sample points are evenly distributed around an interior region in $\mathbb{R}^{n_x}$.

Furthermore, conditions for asymptotic reference tracking in a class of switching affine systems with periodic solutions have been established, building on the property of contraction, and the use of multiple Lyapunov functions. Prior research has employed multiple Lyapunov functions for stationary setpoints. However, the present work is the first to apply them to reference tracking of periodic solutions in switching affine systems, using the specific partitioning addressed here. In cases where the identification procedure fails or yields an inadequate approximation of the observed dynamics (due to optimization constraints derived from the stability requirements), it is advisable to perform the identification first without such requirements, and then subsequently implement a reference tracking control scheme. As demonstrated in an example, these conditions enlarge the solution space for synthesizing a tracking controller, ensuring convergence to the reference.

Future work will investigate partitioning schemes without a common center point and the coupling of multiple oscillators of the proposed type. In addition, efficient synthesis methods for controller parameters satisfying the proposed conditions will be developed, aiming at formulating the design constraints in terms of a set of linear matrix inequalities.


\begin{credits}
\subsubsection{\ackname} Partial financial support by the German Research Foundation (DFG) through the Research Training Group \textit{Biological Clocks on Multiple Time Scales} (GRK 2749/1) is gratefully acknowledged.

\subsubsection{\discintname}
The authors have no competing interests to declare that are
relevant to the content of this article.
\end{credits}
%
%
%
\bibliographystyle{splncs04}
\bibliography{example}
\end{document}